\DeclareSymbolFont{letters}{OML}{txmi}{m}{it}
\newcommand{\Vp}{V_{{P}}}
\newcommand{\Vd}{V_{{D}}}
\newcommand{\Sigmap}{\Sigma_{{P}}}
\newcommand{\Sigmad}{\Sigma_{{D}}}
\newcommand{\State}[2]{( #1, #2 )}
\newcommand{\sdspace}{\textnormal{$x_{D}$}}
\newcommand{\ssdspace}{\textnormal{$y_{D}$}}
\newcommand{\nr}[1]{(#1) \hspace*{0.1cm}}
\newcommand{\Seq}{\mathrel{;}}
\newcommand{\cl}{\textit{cl}}
\newcommand{\novel}[1]{{\textcolor{blue}{#1}}}
\newenvironment{novelty}{\color{blue}}{}
\newcommand{\chk}{\textit{check}}
\newcommand{\upd}{\textit{update}}
\newcommand{\ask}{\textit{ask}}
\newcommand{\nask}{\textit{nask}}
\newcommand{\tell}{\textit{tell}}
\newcommand{\get}{\textit{get}}
\newcommand{\BCCSP}{\textnormal{\textnormal{BCCSP}}}
\newcommand{\BCCSPD}{\textnormal{\textnormal{BCCSP}$_{D}$}}
\newcommand{\BCCSPDc}{\textnormal{\textnormal{BCCSP}$_{D}^{c}$}}
\DeclareSymbolFont{operators}{OT1}{cmr}{m}{n}
\DeclareSymbolFont{letters}{OML}{cmm}{m}{it}
\DeclareSymbolFont{symbols}{OMS}{cmsy}{m}{n}
\DeclareSymbolFont{largesymbols}{OMX}{cmex}{m}{n}
\newcommand{\builddate}{\ifnumcomp{\year}{<}{10}{0}{}\the\year-\ifnumcomp{\month}{<}{10}{0}{}\the\month-\ifnumcomp{\day}{<}{10}{0}{}\the\day\ \currenttime}
\ifnumcomp{\svnhour}{=}{23}{\def\svnhour{01}}{}
\ifnumcomp{\svnhour}{=}{22}{\def\svnhour{00}}{}
\ifnumcomp{\svnhour}{=}{21}{\def\svnhour{23}}{}
\ifnumcomp{\svnhour}{=}{20}{\def\svnhour{22}}{}
\ifnumcomp{\svnhour}{=}{19}{\def\svnhour{21}}{}
\ifnumcomp{\svnhour}{=}{18}{\def\svnhour{20}}{}
\ifnumcomp{\svnhour}{=}{17}{\def\svnhour{19}}{}
\ifnumcomp{\svnhour}{=}{16}{\def\svnhour{18}}{}
\ifnumcomp{\svnhour}{=}{15}{\def\svnhour{17}}{}
\ifnumcomp{\svnhour}{=}{14}{\def\svnhour{16}}{}
\ifnumcomp{\svnhour}{=}{13}{\def\svnhour{15}}{}
\ifnumcomp{\svnhour}{=}{12}{\def\svnhour{14}}{}
\ifnumcomp{\svnhour}{=}{11}{\def\svnhour{13}}{}
\ifnumcomp{\svnhour}{=}{10}{\def\svnhour{12}}{}
\ifnumcomp{\svnhour}{=}{09}{\def\svnhour{11}}{}
\ifnumcomp{\svnhour}{=}{08}{\def\svnhour{10}}{}
\ifnumcomp{\svnhour}{=}{07}{\def\svnhour{09}}{}
\ifnumcomp{\svnhour}{=}{06}{\def\svnhour{08}}{}
\ifnumcomp{\svnhour}{=}{05}{\def\svnhour{07}}{}
\ifnumcomp{\svnhour}{=}{04}{\def\svnhour{06}}{}
\ifnumcomp{\svnhour}{=}{03}{\def\svnhour{05}}{}
\ifnumcomp{\svnhour}{=}{02}{\def\svnhour{04}}{}
\ifnumcomp{\svnhour}{=}{01}{\def\svnhour{03}}{}
\ifnumcomp{\svnhour}{=}{00}{\def\svnhour{02}}{}
\newcommand{\wipinfo}{
  \ \ Last change: \svnyear-\svnmonth-\svnday\ \svnhour:\svnminute:\svnsecond \\
  \qquad\quad Build: \builddate \\
  \quad\qquad Revision: \svnfilerev \qquad\qquad\qquad\qquad\ \
}
\newtheorem{defn}{Definition}
\newtheorem{lem}[defn]{Lemma}
\newtheorem{thm}[defn]{Theorem}
\newcommand{\commform}{\textsf{comm-form}}
\newcommand{\COMM}{\id{COMM}}
\newcommand{\simcom}{\mathrel{\sim_{cc}}}
\newcommand{\sosrule}[2]{\frac{\raisebox{.7ex}{\normalsize{$#1$}}}
                        {\raisebox{-1.0ex}{\normalsize{$#2$}}}}
\newcommand{\trans}[1]{\,{\xrightarrow{{#1}}}\,}
\newcommand{\gtrans}[2]{\,{\stackrel{{#2}}{\rightarrow_{#1}}}\,}
\newcommand{\bisimpar}[1]{\;\underline{\hspace*{-0.15ex}
                        \leftrightarrow\hspace*{-0.15ex}}^{{#1}}}
\newcommand{\bisimsbpar}[1]{\;{\hspace*{-0.15ex}
                        \underline{\leftrightarrow}_{\textnormal{sl}}^{{#1}}\hspace*{-0.15ex}}\;}
\newcommand{\Rsb}{R_{\textnormal{sl}}}
\newcommand{\DedRule}[1]{\mbox{{\bf\small (#1)}}}
\newcommand{\id}[1]{\mathit{#1}}
\def\lparal{\mathbin{\setbox0=\hbox{$\|$}%
        \dimen0=\dp0 \advance\dimen0 -1.5pt \dp0=\dimen0%
        \underline{\kern-1.5pt\box0\kern1.5pt}}}
\newcommand{\Par}{\mathrel{||}}
\newcommand{\nul}{{\ensuremath{\mathbf{0}}}}
\newcommand{\communication}[2]{\ensuremath{#1 \, \| \, #2}}
\newcommand{\tuple}[1]{\ensuremath{\left( \, #1 \, \right)}}
\newcommand{\terminate}[1]{\ensuremath{\tuple{#1} \, \checkmark}}
\newcommand{\terminatee}[2]{\ensuremath{{#1} \,\,\, \checkmark_{#2}}}
\newcommand{\sequential}[2]{\ensuremath{#1 \, \odot \, #2}}
\newcommand{\choice}[2]{\ensuremath{#1 \, \oplus \, #2}}
\newcommand{\disrupt}[2]{\ensuremath{ #1 \, \blacktriangleright \, #2 }}
\newcommand{\encapsulate}[2]{\ensuremath{\partial_{#1} \left( #2 \right) }}
\newcommand{\init}[2]{\ensuremath{#1 \gg #2}}
\def\lparal{\mathbin{\setbox0=\hbox{$\|$}%
        \dimen0=\dp0 \advance\dimen0 -1.5pt \dp0=\dimen0%
                \underline{\kern-1.5pt\box0\kern1.5pt}}}
\newcommand{\lcommunication}[2]{#1 \, \mathbin{\lparal} \, #2}
\newcommand{\fcommunication}[2]{\ensuremath{#1 \, | \, #2}}
\newcommand{\ldisrupt}[2]{\ensuremath{ #1 \, \vartriangleright \, #2 }}
\newcommand{\modelsf}{\ensuremath{\models_{\mathrm{f}}}}
\newcommand{\modelsr}{\ensuremath{\models_{\mathrm{r}}}}
\newcommand{\Dom}[1]{\ensuremath{\mathit{dom}(#1)}}
\newcommand{\clinterval}[2]{[ #1 , #2]}
\newcommand{\ctrans}[3]{\ensuremath{\tuple{#1} \, \stackrel{#2}{\leadsto} \, \tuple{#3}}}
\newcommand{\ctranss}[3]{\ensuremath{{#1} \, \stackrel{#2}{\leadsto} \, {#3}}}
\newcommand{\bbn}{|\mkern-2mu[}
\newlength{\chisymbolwidth}
\definecolor{lightblue}{RGB}{224,224,255}
\definecolor{lightred}{RGB}{255,224,224}
\definecolor{lightgreen}{RGB}{224,255,224}
\definecolor{lightyellow}{RGB}{255,255,224}
\definecolor{lightpurple}{RGB}{255,224,255}
\definecolor{darkerred}{RGB}{64,0,0}
\definecolor{darkred}{RGB}{128,0,0}
\definecolor{darkblue}{RGB}{0,0,128}
\definecolor{darkgreen}{RGB}{0,128,0}
\definecolor{darkpurple}{RGB}{128,0,128}
\newcommand{\colorpar}[3]{\colorbox{#1}{\parbox{#2}{#3}}}
\newcommand{\marginremark}[3]{\marginpar{\colorpar{#2}{\linewidth}{\color{#1}#3}}}
\def\THICKhrulefill{\leavevmode \leaders \hrule height 5pt\hfill \kern \z@}
\newcommand{\remarkDG}[1]{\marginremark{darkred}{lightred}{\tiny{[DG]~ #1}}}
\newcommand{\remarkWF}[1]{\marginremark{darkgreen}{lightgreen}{\tiny{[WF]~ #1}}}
\newcommand{\remarkMM}[1]{\marginremark{darkpurple}{lightpurple}{\tiny{[MM]~ #1}}}
\newcommand{\remarkEG}[1]{\marginremark{darkblue}{lightblue}{\tiny{[EG]~ #1}}}
\newtheorem{remark}[defn]{Remark}
\newif\ifdraft
\renewcommand{\remarkDG}[1]{}
\renewcommand{\remarkWF}[1]{}
\renewcommand{\remarkMM}[1]{}
\renewcommand{\remarkEG}[1]{}
\begin{document}

\title{Algebraic Meta-Theory of Processes with Data}

\author{
Daniel Gebler
\institute{Department of Computer Science, VU University Amsterdam (VU),
\\ De Boelelaan 1081a, NL-1081~HV~Amsterdam, The Netherlands
}
\and
Eugen-Ioan Goriac
\institute{ICE-TCS, School of Computer Science, Reykjavik University,
\\ Menntavegur 1, IS-101, Reykjavik, Iceland
}
\and
Mohammad Reza Mousavi
\institute{Center for Research on Embedded Systems (CERES),
Halmstad University\\ Kristian IV:s v\"ag 3, SE-302 50, Halmstad, Sweden
}
\ifdraft
\and\institute{\wipinfo}
\fi
}

\def\authorrunning{Gebler, Goriac \& Mousavi}
\def\titlerunning{Algebraic Meta-Theory of Processes with Data}

\maketitle

\begin{abstract}
There exists a rich literature of rule formats guaranteeing different algebraic properties for formalisms with a Structural Operational Semantics. Moreover, there exist a few approaches for automatically deriving axiomatizations characterizing strong bisimilarity of processes.
To our knowledge, this literature has never been extended to the setting with data (e.g. to model storage and memory).
We show how the rule formats for algebraic properties can be exploited in a generic manner in the setting with data.
Moreover, we introduce a new approach for deriving sound and ground-complete axiom schemata for a notion of bisimilarity with data, called stateless bisimilarity, based on intuitive auxiliary function symbols for handling the store component. We do restrict, however, the axiomatization to the setting where the store component is only given in terms of constants.
\end{abstract}

\section{Introduction}\label{Sect:intro}

Algebraic properties capture some key features of programming and specification constructs
and can be used both as design principles (for the semantics of such constructs) as well as for verification of
programs and specifications built using them.
{When given the semantics of a language, inferring} properties such as commutativity, associativity and unit element, {as well deriving sets of axioms for reasoning on the behavioural equivalence of two processes} constitute one of the cornerstones of process algebras \cite{Baeten05,Roscoe10} and play essential roles in several disciplines for behavioural modeling and analysis such as term rewriting \cite{Baader99} and model checking \cite{Baier08}.

For formalisms with a Structural Operational Semantics (SOS), there exists a rich literature on
meta-theorems guaranteeing key algebraic properties (commutativity \cite{Mousavi05-IPL}, associativity \cite{Mousavi08-CONCUR}, zero  and unit elements \cite{Mousavi11-TCS}, idempotence \cite{Mousavi12-SCICO}, and distributivity \cite{Mousavi11-LATA}) by means of restrictions on the syntactic shape of the transition rules. {At the same time, for GSOS \cite{Bloom95}, a restricted yet expressive form of SOS specifications, one can obtain a sound and ground-complete axiomatization modulo strong bisimilarity \cite{Aceto94}.}
Supporting some form of data (memory or store) is a missing aspect of these existing meta-theorems,
which bars applicability to the semantics of numerous programming languages and formalisms that do feature these aspects in different forms.

In this paper we provide a natural and generic link between the meta-theory of algebraic properties and axiomatizations, and SOS with data for which we consider that one data state models the whole memory.
Namely, we move the data terms in SOS with data to the labels and instantiate them to closed terms; we call this process \emph{currying}.
Currying allows us to apply directly the existing rule formats for algebraic properties on the curried SOS specifications (which have process terms as states and triples of the form (datum, label, datum) as labels). {We also present a new way of automatically deriving sound and ground-complete axiomatization schemas modulo strong bisimilarity for the curried systems for the setting in which the data component is characterized by constants.}
It turns out that strong bisimilarity for the curried SOS specification coincides with the notion of stateless bisimilarity in the original SOS specifications with data. The latter notion is extensively studied in \cite{Mousavi05-IC} and used, among others, in \cite{Bergstra07,Galpin12,Beek06,Beek07}.
(This notion, in fact, coincides with the notion of strong bisimilarity proposed for Modular SOS in \cite[Section 4.1]{Mosses04a}.)
Hence, using the existing rule formats, we can obtain algebraic laws for SOS specification with data that are sound with respect to stateless bisimilarity,
as well as the weaker notions of initially stateless bisimilarity and statebased bisimilarity, studied in \cite{Mousavi05-IC}.

\paragraph{Related work.} SOS with data and store has been extensively used in specifying semantics
of programming and specification languages, dating back to the original work of Plotkin \cite{Plotkin04b,Plotkin04a}.
Since then, several pieces of work have been dedicated to providing a formalization for SOS specification frameworks allowing one to include data and store and reason over it.
The current paper builds upon the approach proposed in \cite{Mousavi05-IC} (originally published as \cite{Mousavi04-LICS}).

The idea of moving data from the configurations (states) of operational semantics to labels is reminiscent of  Modular SOS \cite{Mosses04b,Mosses04a}, Enhanced SOS \cite{Degano01}, the Tile Model \cite{Gadducci00}, and context-dependent-behaviour framework of \cite{Colvin11}. The idea has also been applied in instances of SOS specification, such as those reported in \cite{Baeten97,Beek06,Owens08}. The present paper contributes to this body of knowledge by presenting a generic transformation from SOS specifications with data and store (as part of the configuration) to Transition System Specifications \cite{Bloom95,Groote92}. The main purpose of this generic transformation is to enable exploiting the several existing rule formats defined on transition system specifications on the results of the transformation and then, transform the results back to the original SOS specifications (with data and store in the configuration) using a meaningful and well-studied notion of bisimilarity with data.
Our transformation is also inspired by the translation of SOS specifications of programming languages into rewriting logic, see e.g., \cite{Meseguer02,Meseguer04a}.

\paragraph{Structure of the paper.}
The rest of this paper is organized as follows.
In Section \ref{sec:pre}, we recall some basic definitions regarding SOS specifications and behavioural equivalences.
In Section \ref{sec:currying}, we present the currying technique and formulate the theorem regarding the correspondence between strong and stateless bisimilarity. {In Section~\ref{sec:axiomatizing} we show how to obtain sound and ground-complete axiomatizations modulo strong bisimilarity for those curried systems for which the domain of the data component is a finite set of constants.}
We apply the currying technique to Linda \cite{Carriero89}, a coordination language from the literature chosen as case study in Section \ref{sec:cases}, and show how key algebraic properties of the operators defined in the language semantics are derived.
We conclude the paper in Section \ref{sec:conc}, by summarizing the results and presenting some  directions for future work.

\section{Preliminaries}\label{sec:pre}

\subsection{Transition Systems Specifications}

We assume a multisorted signature $\Sigma$ with designated and distinct sorts $P$ and $D$ for processes and data, respectively.
Moreover, we assume infinite and disjoint sets of process variables $\Vp$ (typical members: $x_{P}, y_{P}, x_{P_{i}}, y_{P_{i}} \ldots$) and data variables $\Vd$ (typical members: $x_{D}, y_{D}, x_{D_{i}}, y_{D_{i}} \ldots$), ranging over their respective sorts $P$ and $D$.

Process and data signatures, denoted respectively by $\Sigmap \subseteq \Sigma$ and $\Sigmad \subseteq \Sigma$, are sets of function symbols with fixed arities.
We assume in the remainder that the function symbols in $\Sigmad$ take only parameters of the sort $\Sigmad$, while those in $\Sigmap$ can take parameters both from $\Sigmap$ and $\Sigmad$, as in practical specifications of systems with data, process function symbols do take data terms as their parameters.

Terms are built using variables and function symbols by respecting their domains of definition. The sets of open process and data terms are denoted by ${\mathbb{T}}(\Sigma_{P})$ and ${\mathbb{T}}(\Sigma_{D})$, respectively.
Disjointness of process and data variables is mostly for notational convenience.
Function symbols from the process signature are typically denoted by $f_{P}$, $g_{P}$, $f_{P_{i}}$ and $g_{P_{i}}$. Process terms are typically denoted by $t_{P}, t'_{P},$ and $t_{P_{i}}$.
Function symbols from the data signature are typically denoted by $f_{D}, f'_{D}$ and $f_{D_{i}}$, and data terms are typically denoted by $t_{D}, t'_{D},$ and $t_{D_{i}}$.
The sets of closed process and data terms are denoted by $T(\Sigmap)$ and $T(\Sigmad)$, respectively. Closed process and data terms are typically denoted by $p,q,p',p_i,p'_i$ and $d,e,d',d_i,d'_i$, respectively.
We denote process and data substitutions by $\sigma$, $\sigma'$, and $\xi$, $\xi'$, respectively.
We call substitutions $\sigma:\Vp\to {\mathbb{T}}(\Sigma_{P})$ process substitutions and $\xi:\Vd\to{\mathbb{T}}(\Sigma_{D})$ data substitutions.
A substitution replaces a variable in an open term with another (possibly open) term.
Notions of open and closed and the concept of substitution are lifted to formulae in the natural way.

\begin{defn}[Transition System Specification]
Consider a signature $\Sigma$ and a set of labels $L$ (with typical members $l, l', l_0, \ldots$).
A \emph{positive transition formula} is a triple $(t,l,t')$, where $t, t' \in {\mathbb{T}}(\Sigma)$ and $l \in L$, written $t \trans{l} t'$, with the intended meaning: process $t$ performs the action labeled as $l$ and becomes process $t'$.

A \emph{transition rule} is defined as a tuple $(H,\alpha)$, where $H$ is a set of formulae and $\alpha$ is a formula. The formulae from $H$ are called \emph{premises} and the formula $\alpha$ is called the \emph{conclusion}.
A transition rule is mostly denoted by $\sosrule{H}{\alpha}$ and has the following generic shape:
\[
\DedRule{d}~\sosrule{\{t_i \trans{l_{ij}} t_{ij}  \mid  i \in I, j \in J_{i}\}}{t \trans{l} t'}
,\]
where $I, J_{i}$ are sets of indexes, $t, t',t_{i}, t_{ij} \in {\mathbb{T}}(\Sigma)$, and $l_{ij} \in L$.
A \emph{transition system specification} (abbreviated TSS) is a tuple $(\Sigma, L,
{\cal R})$ where $\Sigma$ is a signature, $L$ is a set of labels, and ${\cal R}$ is a set of transition rules of the provided shape.

We extend the shape of a transition rule to handle process terms paired with data terms in the following manner:

\[\DedRule{d'}~\sosrule{\{(t_{P_{i}},t_{D_{i}}) \trans{l_{ij}} (t_{P_{ij}}, t_{D_{ij}}) \mid i \in I, j \in J_{i}\}}{(t_{P}, t_{D}) \trans{l} (t'_{P}, t'_{D})},\]
where $I, J_{i}$ are index sets, $t_{P}, t'_{P},t_{P_{i}}, t_{P_{ij}} \in {\mathbb{T}}(\Sigma_{P})$, $t_{D}, t'_{D},t_{D_{i}}, t_{D_{ij}} \in {\mathbb{T}}(\Sigma_{D})$, and $l_{ij} \in L$.
A \emph{transition system specification with data} is a triple ${\cal T} = (\Sigma_{P} \cup \Sigma_{D}, L,
{\cal R})$ where $\Sigma_{P}$ and $\Sigma_{D}$ are process and data signatures respectively, $L$ is a set of labels, and ${\cal R}$ is a set of transition rules handling pairs of process and data terms.
\end{defn}

\begin{defn}
Let ${\cal T}$ be a TSS with data. A \emph{proof} of a formula $\phi$ from ${\cal T}$ is an upwardly branching tree whose nodes are labelled by formulas such that

\begin{enumerate}
\item the root node is labelled by $\phi$, and
\item if $\psi$ is the label of a node $q$ and the set $\{\psi_{i} \mid i \in I\}$ is the set of labels of the nodes directly above $q$, then there exist a deduction rule $\sosrule{\{\chi_{i} \mid i \in I\}}{\chi}$, a process substitution $\sigma$, and a data substitution $\xi$ such that the application of these substitutions to $\chi$ gives the formula $\psi$, and for all $i \in I$, the application of the substitutions to $\chi_{i}$ gives the formula $\psi_{i}$.
\end{enumerate}

Note that by removing the data substitution $\xi$ from above we obtain the definition for proof of a formula from a standard TSS.
The notation ${\cal T} \vdash \phi$ expresses that there exists a proof of the formula $\phi$ from the TSS (with data) ${\cal T}$. Whenever ${\cal T}$ is known from the context, we will write $\phi$ directly instead of ${\cal T} \vdash \phi$.

\end{defn}


\subsection{Bisimilarity}
In this paper we use two notions of equivalence over processes, one for standard transition system specifications and one for transition system specifications with data. Stateless bisimilarity is the natural counterpart of strong bisimilarity, used in different formalisms such as \cite{Beek06,Beek07,Bergstra07,Galpin12}.

\begin{defn}[Strong Bisimilarity\ \cite{Park81}]\label{def:strongbisim}
Consider a TSS ${\cal T} = (\Sigma_{P}, L, {\cal R})$. A relation $R \subseteq T(\Sigmap) \times T(\Sigmap)$ is a \emph{strong bisimulation} if and only if it is symmetric and $\forall_{p, q} ~(p,q) \in R \Rightarrow (\forall_{l,p'} ~p \trans{l} p' \Rightarrow \exists_{q'} ~ q \trans{l} q' \land (q, q') \in R)$. Two closed terms $p$ and $q$  are \emph{strongly bisimilar}, denoted by $p \bisimpar{{\cal T}} q$ if there exists a strong bisimulation relation $R$ such that $(p,q) \in R$.
\end{defn}

\begin{defn}[Stateless Bisimilarity\ \cite{Mousavi05-IC}]
Consider a TSS with data ${\cal T} = (\Sigma_{P} \cup \Sigma_{D}, L, {\cal R})$. A relation $\Rsb \subseteq T(\Sigmap) \times T(\Sigmap)$ is a \emph{stateless bisimulation} if and only if it is symmetric and $\forall_{p, q} ~(p,q) \in \Rsb \Rightarrow \forall_{d, l, p',d'} ~(p,d) \trans{l} (p',d') \Rightarrow \exists_{q'} ~(q,d) \trans{l} (q',d') \land (p', q') \in \Rsb$. Two closed process terms $p$ and $q$  are \emph{stateless bisimilar}, denoted by $p \bisimsbpar{{\cal T}} q$, if there exists a stateless bisimulation relation $\Rsb$ such that $(p,q) \in \Rsb$.
\end{defn}

\subsection{Rule Formats for Algebraic Properties}
As already stated, the literature on rule formats guaranteeing algebraic properties is extensive. For the purpose of this paper we show the detailed line of reasoning only for the commutativity of binary operators, while, for readability, we refer to the corresponding papers and theorems for the other results in Section~\ref{sec:cases}.

\begin{defn}[Commutativity]
Given a TSS and a binary process operator $f$ in its process signature, $f$ is called
{\em commutative w.r.t.\ $\sim$}, if the following equation is sound w.r.t.\
$\sim$:
\[
f(x_0, x_1) = f(x_1, x_0).
\]
\end{defn}

\begin{defn}[\label{def::commform}Commutativity format\ \cite{BEATCS09}]
A transition system specification over signature $\Sigma$ is in
\commform\ format with respect to a set of binary function symbols
$\COMM \subseteq \Sigma$ if all its $f$-defining transition rules with
$f \in \COMM$ have the following form
\[
\DedRule{c}~\sosrule{\{x_j \trans{l_{ij}} y_{ij}  \mid  i \in I\}}{f(x_{0}, x_1) \trans{l} t}
\]
where $j \in \{0,1\}$, $I$ is an arbitrary index set, and variables
appearing in the source of the conclusion and target of the premises
are all pairwise distinct. We denote the set of premises of
\DedRule{c} by $H$ and the conclusion by $\alpha$. Moreover, for each such rule,
there exist a transition rule \DedRule{c'} of the following form in
the transition system specification
\[
\DedRule{c'}~\sosrule{H'}{f(x'_{0}, x'_{1}) \trans{l} t'}
\]
and a bijective mapping (substitution) $\hbar$ on variables such that
\begin{itemize}
\item $\hbar(x'_0) = x_1$ and  $\hbar(x'_1) = x_0$,
\item $\hbar(t') \simcom t$ and
\item $\hbar(h') \in H$, for each ${h' \in H'}$,
\end{itemize}
where $\simcom$ means equality up to swapping of arguments of operators in $\COMM$ in any context.
Transition rule \DedRule{c'} is called the {\em commutative mirror} of \DedRule{c}.
\end{defn}

\begin{thm}[Commutativity for \textsf{comm-form}\ \cite{BEATCS09}]
\label{thm:commform}
If a transition system specification is in \commform\ format with respect to a set of operators $\COMM$, then all operators in $\COMM$ are commutative with respect to strong bisimilarity.
\end{thm}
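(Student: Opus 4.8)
The plan is to prove the stronger statement that $\simcom$, restricted to closed process terms, is a strong bisimulation. The theorem follows at once: for any $f \in \COMM$ and closed $p_0,p_1$ we have $f(p_0,p_1) \simcom f(p_1,p_0)$, hence $f(p_0,p_1) \bisim f(p_1,p_0)$, which is precisely soundness of $f(x_0,x_1) = f(x_1,x_0)$ with respect to strong bisimilarity. Before the main argument I would record two elementary properties of $\simcom$: it is symmetric (swapping arguments is an involution), which discharges the symmetry clause of a bisimulation for free; and it is a congruence closed under substitution, so that $\sigma(x) \simcom \sigma'(x)$ for all $x \in \vars{t}$ implies $\sigma(t) \simcom \sigma'(t)$, and $t \simcom t'$ implies $\sigma(t) \simcom \sigma(t')$.

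For the transfer property, assume $p \simcom q$ and $p \trans{l} p'$, and induct on the depth of a proof tree witnessing $p \trans{l} p'$. Let $r$ be the rule at the root and $\sigma$ the process substitution used, so $\sigma$ sends the source of the conclusion of $r$ to $p$. The case analysis is on how $q$ relates to $p$ at the operator that $r$ acts on. If there is no argument swap at that operator (i.e.\ $p$ and $q$ have the same leading operator with pairwise $\simcom$-related arguments in the same order) then $r$ can be reused on $q$: for each premise one invokes the induction hypothesis to obtain a matching transition with a $\simcom$-related target, assembles these into a substitution $\tau$ agreeing with $\sigma$ up to $\simcom$ on the relevant variables, and concludes $q \trans{l} q'$ with $q' \simcom p'$ by congruence. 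If instead the leading operator is some $f \in \COMM$ and $q$ is $p$ with its two arguments swapped (up to $\simcom$), then $r$ is an $f$-defining rule, hence by the \commform\ format some \DedRule{c} of the restricted shape with conclusion source $f(x_0,x_1)$ and premises of the form $x_j \trans{l_{ij}} y_{ij}$, equipped with a commutative mirror \DedRule{c'} and a bijection $\hbar$.

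In this second case I would build the matching transition of $q = f(q_0,q_1)$ using \DedRule{c'}. Each premise $x'_{j'} \trans{l'} y'$ of \DedRule{c'} has, via $\hbar$, a counterpart $x_{1-j'} \trans{l'} \hbar(y')$ among the premises of \DedRule{c}, which in the given proof tree is discharged by a strictly shallower subderivation of $p_{1-j'} \trans{l'} r_{y'}$ with $r_{y'} = \sigma(\hbar(y'))$; since $p_{1-j'} \simcom q_{j'}$ (precisely the argument-swap hypothesis), the induction hypothesis yields $q_{j'} \trans{l'} s_{y'}$ with $s_{y'} \simcom r_{y'}$, and taking $\sigma'$ with $\sigma'(x'_0) = q_0$, $\sigma'(x'_1) = q_1$, $\sigma'(y') = s_{y'}$ satisfies every premise of \DedRule{c'}. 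Hence $q \trans{l} q'$ with $q' = \sigma'(t')$, where $t'$ is the target of the conclusion of \DedRule{c'}. It then remains to see $q' \simcom p'$: the substitutions $\sigma'$ and $\sigma \circ \hbar$ agree up to $\simcom$ on all variables of $t'$ (on $x'_0,x'_1$ by the swap hypothesis, on the $y'$ by $s_{y'} \simcom r_{y'}$), so $q' = \sigma'(t') \simcom \sigma(\hbar(t'))$ by congruence, and $\hbar(t') \simcom t$ together with substitution-closure gives $\sigma(\hbar(t')) \simcom \sigma(t) = p'$.

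The main obstacle is exactly this last case: the bookkeeping around $\hbar$, namely keeping the roles of $x_0/x_1$ versus $x'_0/x'_1$ straight, checking that the premise targets of \DedRule{c'} can be consistently instantiated (this is where the pairwise-distinctness condition on the variables of \DedRule{c} and the bijectivity of $\hbar$ are needed), and verifying the identity $\sigma' = \sigma \circ \hbar$ on $\vars{t'}$ that makes $\hbar(t') \simcom t$ transport to $q' \simcom p'$. A secondary point requiring care in the first case is that the format constrains only $f$-defining rules with $f \in \COMM$; for other operators the defining rules are unrestricted, so one relies on the standard well-formedness assumption that a rule's conclusion contains no variables beyond those of its source and its premise targets, and on the observation that an argument swap cannot occur at an operator outside $\COMM$, so $p$ and $q$ genuinely share the structure that $r$ inspects.
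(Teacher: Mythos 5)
The paper itself gives no proof of Theorem~\ref{thm:commform}: it is imported wholesale from \cite{BEATCS09}, so there is nothing in-paper to compare against. Your argument is essentially the standard proof of that result: show that $\simcom$ on closed terms is a strong bisimulation by induction on the depth of the derivation of $p \trans{l} p'$, split on whether the argument swap relating $p$ to $q$ occurs at the root, and in the root-swap case replay the derivation through the commutative mirror \DedRule{c'} under a substitution that agrees with $\sigma \circ \hbar$ up to $\simcom$, using the induction hypothesis on the (strictly shallower) premise derivations. The details you give for that case --- well-definedness of $\sigma'$ from the pairwise-distinctness condition and the bijectivity of $\hbar$, and the chain $q' = \sigma'(t') \simcom \sigma(\hbar(t')) \simcom \sigma(t) = p'$ --- are correct.

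The one substantive point is the case you relegate to ``standard well-formedness assumptions,'' and it deserves to be promoted from a caveat to an explicit hypothesis, because Definition~\ref{def::commform} as reproduced in this paper constrains \emph{only} the $f$-defining rules for $f \in \COMM$, and with genuinely unrestricted rules for the remaining operators the theorem is false. Concretely: take $\COMM = \{f\}$ with the single (self-mirrored) rule $\sosrule{}{f(x_0,x_1) \trans{\ell} g(f(x_0,x_1))}$, which satisfies the format since $g(f(x_1,x_0)) \simcom g(f(x_0,x_1))$, together with a $g$-rule whose conclusion source pattern-matches, say $\sosrule{}{g(f(a,x)) \trans{\beta} \nul}$ for a constant $a$. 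Then $f(a,b)$ can perform $\ell$ followed by $\beta$ while $f(b,a)$ cannot, so they are not strongly bisimilar. Your Case 1 (no swap at the root) is therefore not mere bookkeeping: it is exactly the place where one needs the ambient TSS to be in a format (well-founded \tyft, as in the full comm-\tyft\ format of the cited literature) guaranteeing that $\simcom$-related arguments can be propagated through the rules of the non-$\COMM$ operators. With that global hypothesis stated, your proof goes through; without it, no proof can exist.
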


\subsection{Sound and ground-complete axiomatizations}
\label{subsec:axiomatizations}

In this section we recall several key aspects presented in \cite{Aceto94}, where the authors provide a procedure for converting any GSOS language definition that disjointly extends the language for synchronization trees to a finite complete equational axiom system which characterizes strong bisimilarity over a disjoint extension of the original language. It is important to note that we work with the  GSOS format because it guarantees that bisimilarity is a congruence and that the transition relation is finitely branching \cite{Bloom95}.
For the sake of simplicity, we confine ourselves to the positive subset of the GSOS format; we expect  the generalization to the full GSOS format to be straightforward.

\begin{defn}[Positive GSOS rule format]
\label{def:gsos}
Consider a process signature $\Sigmap$. A \emph{positive GSOS rule} $\rho$ over $\Sigmap$ has the shape:
\[
\DedRule{g}~\sosrule{\{x_i \trans{l_{ij}} y_{ij}  \mid  i \in I, j \in J_{i}\}}{f(x_{1}, \ldots, x_{n}) \trans{l} C[\vec{x}, \vec{y}]}
,\]
where all variables are distinct, $f$ is an operation symbol form $\Sigmap$ with arity $n$, $I \subseteq \{1, \ldots, n\}$, $J_{i}$ finite for each $i \in I$, $l_{ij}$ and $l$ are labels standing for actions ranging over a given set denoted by $L$, and $C[\vec{x}, \vec{y}]$ is a $\Sigmap$-context with variables including at most the $x_{i}$'s and $y_{ij}$'s.
\end{defn}

A finite tree term $t$ is built according to the following grammar:
\[
t {\,\,::=\,\,} \nul \mid l.t ~ (\forall l \in {L}) \mid t + t.
\]


We denote this signature by $\Sigma_{\textnormal{BCCSP}}$. Intuitively, $\nul$ represents a process that does not exhibit any behaviour, $s + t$ is the nondeterministic choice between the behaviours of $s$ and $t$, while $l.t$ is a process that first performs action $l$ and behaves like $t$ afterwards. The operational semantics that captures this intuition is given by the rules of BCCSP \cite{Glabbeek01}:

\[
\sosrule{}{l.x \xrightarrow{l} x}
\qquad
\sosrule{x \xrightarrow{l} x'}{x + y \xrightarrow{l} x'}
\qquad
\sosrule{y \xrightarrow{l} y'}{x + y \xrightarrow{l} y'}\,.
\]

\begin{defn}[\label{def::axiom}Axiom System]
An {\em axiom} (or {\em equation}) {\em system} $E$ over a signature $\Sigma$ is a set of
equalities of the form $t = t'$, where $t,t' \in
\mathbb{T}({\Sigma})$. An equality $t = t'$, for some $t,t' \in
\mathbb{T}({\Sigma})$, is derivable from $E$, denoted by $E \vdash t = t'$,
if and only if it is in the smallest congruence relation over
$\Sigma$-terms induced by the equalities in $E$.

\end{defn}
We consider the axiom system $E_{\textnormal{BCCSP}}$ which consists of the following axioms:

\begin{center}
\begin{tabular}{r@{\hspace{3pt}}c@{\hspace{3pt}}lr@{\hspace{20pt}}r@{\hspace{3pt}}c@{\hspace{3pt}}lr}
$x + y$ & = & $y + x$ & ~ &
$x + x$ & = & $x$ & ~
\\[1ex]
$(x + y) + z$ & = & $x + (y + z)$ & ~ &
$x + \nul$ & = & $x$\,. &
\end{tabular}
\end{center}

\begin{thm}[\cite{Hennessy85}]
\label{thm:soundness_completeness}
$E_{\textnormal{BCCSP}}$ is sound and ground-complete for bisimilarity on $T(\Sigma_{\textnormal{BCCSP}})$. That is, it holds that
$E_{\textnormal{BCCSP}} \vdash p = q \text{ if, and only if, } p \bisimpar{\BCCSP} \,q$ for any two ground terms $p$ and $q \in T(\Sigma_{\textnormal{BCCSP}})$.
\end{thm}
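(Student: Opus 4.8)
The plan is to prove the two directions separately, with soundness being routine and ground-completeness carrying the weight of the argument.

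\textbf{Soundness.} It suffices to check that each of the four equations in $E_{\textnormal{BCCSP}}$ is valid modulo $\bisimpar{\BCCSP}$: since the BCCSP rules fit the positive GSOS format (Definition~\ref{def:gsos}), strong bisimilarity is a congruence on $T(\Sigma_{\textnormal{BCCSP}})$, hence closed under the congruence closure that defines $\vdash$, so validity of the axioms propagates to every derivable equality (formally by induction on the derivation). For $x+y=y+x$ and $(x+y)+z=x+(y+z)$, the symmetric closure of the identity relation augmented with the relevant instances is a bisimulation because $+$ merely pools the outgoing transitions of its arguments. For $x+x=x$, the terms $p+p$ and $p$ have literally the same set of $\trans{l}$-successors; for $x+\nul=x$, the term $\nul$ contributes no transitions, so again $p+\nul$ and $p$ have the same successors. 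Unioning these relations with the identity and symmetrising yields the required bisimulations.

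\textbf{Normalisation.} Call a closed term a \emph{normal form} if it is $\nul$ or has the shape $l_1.p_1 + \cdots + l_k.p_k$ with $k\ge 1$, each $p_i$ a normal form, and no two distinct summands $l_i.p_i$, $l_j.p_j$ provably equal from $E_{\textnormal{BCCSP}}$. First I would show, by structural induction on closed terms, that every $p\in T(\Sigma_{\textnormal{BCCSP}})$ satisfies $E_{\textnormal{BCCSP}}\vdash p = n(p)$ for some normal form $n(p)$: associativity and commutativity flatten nested sums, $x+\nul=x$ deletes $\nul$-summands, $x+x=x$ together with associativity/commutativity deletes duplicate summands, and the continuations are put in normal form by the induction hypothesis and congruence. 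By soundness, $n(p)\bisimpar{\BCCSP} p$. Hence it remains to prove the \emph{key lemma}: if $p$ and $q$ are normal forms and $p\bisimpar{\BCCSP} q$, then $E_{\textnormal{BCCSP}}\vdash p = q$.

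\textbf{Key lemma and conclusion.} This I would prove by induction on the sum of the depths of $p$ and $q$. If $p=\nul$ then $p$ has no transitions, so neither does $q$, so $q$ (being a normal form) is $\nul$; symmetrically. Otherwise write $p=\sum_i l_i.p_i$ and $q=\sum_j m_j.q_j$. For each summand $l_i.p_i$ of $p$ we have $p\trans{l_i} p_i$, so by bisimilarity some summand $m_j.q_j$ of $q$ has $m_j=l_i$ and $p_i\bisimpar{\BCCSP} q_j$; since $p_i,q_j$ are normal forms of strictly smaller depth, the induction hypothesis gives $E_{\textnormal{BCCSP}}\vdash p_i=q_j$, hence $E_{\textnormal{BCCSP}}\vdash l_i.p_i=m_j.q_j$. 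Thus every summand of $p$ is provably equal to a summand of $q$ and, symmetrically, vice versa; using commutativity, associativity and idempotence of $+$ one then derives $E_{\textnormal{BCCSP}}\vdash p+q=q$ and $E_{\textnormal{BCCSP}}\vdash p+q=p$, whence $E_{\textnormal{BCCSP}}\vdash p=q$. Combining with normalisation: for arbitrary closed $p,q$ with $p\bisimpar{\BCCSP} q$ we get $n(p)\bisimpar{\BCCSP} p\bisimpar{\BCCSP} q\bisimpar{\BCCSP} n(q)$, so $E_{\textnormal{BCCSP}}\vdash p=n(p)=n(q)=q$.

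I expect the main obstacle to be the bookkeeping around the normal-form step and the final absorption argument: one must set up the invariant ``each summand of the one term is provably equal to a summand of the other'' precisely enough that idempotence can genuinely collapse $p+q$ to each of $p$ and $q$, and one must take the induction measure so that passing from $p$ to a continuation $p_i$ strictly decreases it — which is exactly why the continuations, not just arbitrary subterms, are required to be in normal form. A minor further subtlety is that the matching of transitions tacitly relies on $\bisimpar{\BCCSP}$ being an equivalence and a congruence on closed terms, so the congruence-closure definition of $\vdash$ and the rule format of BCCSP are both genuinely used.
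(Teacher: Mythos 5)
Your proof is correct and follows the classical Hennessy--Milner argument that the paper simply imports by citation (Theorem~\ref{thm:soundness_completeness} is stated without proof, attributed to \cite{Hennessy85}). It is also structurally the same as the paper's own proof of the analogous ground-completeness result for $E_{\BCCSPDc}$: a head-normalization lemma established by structural induction, followed by an induction on term height showing that every summand of one normal form is provably equal to a summand of the other, so that commutativity, associativity and idempotence of $+$ collapse $p+q$ to both $p$ and $q$.
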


\begin{defn}[Disjoint extension]
\label{def:disjExt}
A {GSOS} system $G'$ is a disjoint extension of a {GSOS} system $G$, written $G \sqsubseteq G'$, if the signature and the rules of $G'$ include those of $G$, and $G'$ does not introduce new rules for operations in $G$.
\end{defn}

In \cite{Aceto94} it is elaborated how to obtain an axiomatization for a GSOS system $G$ that disjointly extends $\textnormal{BCCSP}$. For technical reasons the procedure involves initially transforming $G$ into a new system $G'$ that conforms to a restricted version of the GSOS format, named \emph{smooth and distinctive}.
We avoid presenting this restricted format, as the method proposed in Section~\ref{sec:axiomatizing} allows us to obtain the axiomatization without the need to transform the initial system $G$.

\section{Currying Data\label{sec:currying}}

We apply the process of currying \cite{springerlink:10.1023/A:1010000313106} known from functional programming to factor out the data from the source and target of transitions and enrich the label to a triple capturing the data flow of the transition. This shows that, for specifying behaviour and data of dynamic systems, the data may be freely distributed over states (as part of the process terms) or system dynamics (action labels of the transition system), providing a natural correspondence between the notions of stateless bisimilarity and strong bisimilarity.
An essential aspect of our approach is that the process of currying is a syntactic transformation defined on transition system specifications (and not a semantic transformation on transition systems); this allows us to apply meta-theorems from the meta-theory of SOS and obtain semantic results by considering the syntactic shape of (transformed) SOS rules.

\begin{defn}[Currying and Label Closure]
\label{def:currying}
Consider the TSS with data ${\cal T} = (\Sigma_{P} \cup \Sigma_{D}, L, \mathcal{R})$ and transition rule
 $\rho \in {\cal R}$ of the shape
$\rho = \sosrule{\{(t_{P_{i}},t_{D_{i}}) \trans{l_{ij}} (t_{P_{ij}}, t_{D_{ij}}) \mid i \in I, j \in J_{i}\}}{(t_{P}, t_{D}) \trans{l} (t'_{P}, t'_{D})}$.

The \emph{curried version} of $\rho$ is the rule $\rho^{c} = \sosrule{\{t_{P_{i}} \trans{(t_{D_i}, l_{ij}, t_{D_{ij}})} t_{P_{ij}} \mid i \in I, j \in J_{i}\}}{t_{P} \trans{(t_{D}, l, t'_{D})} t'_{P}}$.
We further define
${\cal R}^{c} = \{
\rho^{c} \mid
\rho \in \mathcal{R}$\}
and  $L^{c} = \{(t_{D}, l, t'_{D}) \mid l \in L,~~t_{D},t'_{D} \in {\mathbb{T}}(\Sigmad)\}$.
The \emph{curried version} of ${\cal T}$ is defined as ${\cal T}^{c} = (\Sigma_{P}, L^{c},{\cal R}^{c})$.

By $\rho^{c}_{\xi} = \sosrule{\{t_{P_{i}} \trans{(\xi(t_{D_i}), l_{ij}, \xi(t_{D_{ij}}))} t_{P_{ij}} \mid i \in I, j \in J_{i}\}}{t_{P} \trans{(\xi(t_{D}), l, \xi(t'_{D}))} t'_{P}}$ we denote the \emph{closed label version} of $\rho^{c}$ with respect to the closed data substitution $\xi$.
By $\cl(\rho^{c})$ we denote the set consisting of all closed label versions of $\rho^{c}$, i.e. $\cl(\rho^{c}) = \{\rho^{c}_{\xi} \mid \rho^{c} \in {\cal R}^{c}, \xi \text{ is a closed data substitution}\}$. We further define $\cl({\cal R}^{c}) = \{\cl(\rho^{c}) \mid \rho^{c} \in {\cal R}^{c}\}$ and $\cl(L^{c}) = \{(\xi(t_{D}), l, \xi(t'_{D})) \mid (t_{D}, l, t'_{D}) \in L^{c}, \xi \text{ is a closed data substitution}\}$. The \emph{closed label version} of ${\cal T}^{c}$ is $\cl({\cal T}^{c}) = (\Sigma_{P}, \cl(L^{c}),\cl({\cal R}^{c}))$.
\end{defn}

Our goal is to reduce the notion of stateless bisimilarity between two closed process with data terms to strong bisimilarity by means of currying the TSS with data and closing its labels. The following theorem states how this goal can be achieved.

\begin{thm}
\label{thm:sliffsb}
Given a TSS ${\cal T}=(\Sigma,L, D)$ with data, for each two closed process terms $p, q \in T(\Sigmap)$, $p \bisimsbpar{{\cal T}} q$ if, and only if, $p \bisimpar{\cl({\cal T}^{c})} q$.
\end{thm}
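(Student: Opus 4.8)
The plan is to prove the two implications of the biconditional separately, in each case by exhibiting a witnessing bisimulation of the appropriate kind built directly from a bisimulation of the other kind. The bridge between the two settings is the observation that, for any closed process terms $p,q$, any closed data terms $d,d'$, any label $l$, and any closed process term $p'$, we have
\[
(p,d) \trans{l} (p',d') \text{ in } {\cal T}
\quad\Longleftrightarrow\quad
p \trans{(d,l,d')} p' \text{ in } \cl({\cal T}^{c}).
\]
So the very first step is to establish this proof-tree correspondence as a lemma: a derivation of $(p,d)\trans{l}(p',d')$ in ${\cal T}$ can be turned, node by node, into a derivation of $p\trans{(d,l,d')}p'$ in $\cl({\cal T}^{c})$ and vice versa. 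This is proved by induction on the height of the proof tree. For the forward direction, a proof step uses a rule $\rho\in{\cal R}$ together with process substitution $\sigma$ and data substitution $\xi$; one checks that the same $\sigma$, applied to the curried-and-closed rule $\rho^{c}_{\xi}$ (whose labels are exactly the $\xi$-images of the data terms of $\rho$), justifies the corresponding curried conclusion, and the premises match by the induction hypothesis. The converse is symmetric: a step in $\cl({\cal T}^{c})$ uses some $\rho^{c}_{\xi}$ with a process substitution $\sigma$, and $(\sigma,\xi)$ together witness the corresponding step in ${\cal T}$. The point where I would be careful is bookkeeping about variables: the curried rule keeps the process structure of source/target untouched and only moves the data terms into the label, and since $\Vp$ and $\Vd$ are disjoint, a process substitution $\sigma$ never touches data subterms and $\xi$ never touches process subterms, so the two substitutions genuinely act independently and can be combined and separated freely.

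**Given this lemma, the forward implication is routine.** Suppose $p\bisimsbpar{{\cal T}}q$, witnessed by a stateless bisimulation $\Rsl$. I claim the very same relation $\Rsl$ is a strong bisimulation on $\cl({\cal T}^{c})$. It is symmetric by hypothesis. For the transfer condition, take $(p,q)\in\Rsl$ and a transition $p\trans{(d,l,d')}p'$ in $\cl({\cal T}^{c})$; note every label of $\cl({\cal T}^{c})$ has this shape with $d,d'$ closed. By the lemma, $(p,d)\trans{l}(p',d')$ in ${\cal T}$; by the stateless bisimulation property there is $q'$ with $(q,d)\trans{l}(q',d')$ and $(p',q')\in\Rsl$; by the lemma again $q\trans{(d,l,d')}q'$ in $\cl({\cal T}^{c})$. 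Hence $\Rsl$ is a strong bisimulation and $p\bisimpar{\cl({\cal T}^{c})}q$.

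**The converse implication is equally short once the lemma is in hand.** Suppose $p\bisimpar{\cl({\cal T}^{c})}q$, witnessed by a strong bisimulation $R$. I claim $R$ is a stateless bisimulation for ${\cal T}$. Symmetry is immediate. For the transfer condition, let $(p,q)\in R$ and $(p,d)\trans{l}(p',d')$ in ${\cal T}$ for some closed $d,d'$; the lemma gives $p\trans{(d,l,d')}p'$ in $\cl({\cal T}^{c})$, so there is $q'$ with $q\trans{(d,l,d')}q'$ and $(p',q')\in R$, and the lemma translates this back to $(q,d)\trans{l}(q',d')$ in ${\cal T}$. Thus $R$ witnesses $p\bisimsbpar{{\cal T}}q$.

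**The main obstacle** is not conceptual but the careful formulation and inductive verification of the proof-tree correspondence lemma — in particular, keeping straight that currying is a purely syntactic rewriting of each rule that leaves the process part alone, and that instantiating the curried rule's data-laden label by a closed data substitution $\xi$ exactly mirrors applying $\xi$ inside the original rule. Everything downstream of that lemma is a mechanical unfolding of the two bisimulation definitions, which differ only in whether the datum $d$ travels in the configuration or in the label.
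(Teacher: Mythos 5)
Your proposal is correct and follows essentially the same route as the paper: the same transition-correspondence lemma (${\cal T} \vdash (p,d) \trans{l} (p',d')$ iff $\cl({\cal T}^{c}) \vdash p \trans{(d,l,d')} p'$, proved by induction on proof depth via the pairing of $\rho$, $\sigma$, $\xi$ with $\rho^{c}_{\xi}$, $\sigma$), followed by showing that the very same relation witnesses both notions of bisimilarity. The only cosmetic difference is that the paper dismisses the right-to-left direction as ``symmetric'' whereas you spell it out.
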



\section{Axiomatizing GSOS with Data\label{sec:axiomatizing}}

In this section we provide an axiomatization schema for reasoning about stateless bisimilarity. We find it easier to work directly with curried systems instead of systems with data because this allows us to adapt the method introduced in \cite{Aceto94} by considering the set of more complex labels that integrate the data, as presented in Section~\ref{sec:currying}.

{It is important to note that we present the schema by considering that the signature for data terms, $\Sigma_{D}$, consists only of a finite set of constants.
However, as we foresee a future extension to a setting with arbitrary data terms, we choose to use the notation for arbitrary data terms instead of the one for constants in some of the following definitions.
}

BCCSP is extended to a setting with data, \BCCSPD.
This is done by adding to the signature for process terms $\Sigma_{\BCCSP}$ two auxiliary operators for handling the store, named {\chk} and {\upd}, obtaining a new signature, $\Sigma_{\BCCSPD}$.
Terms over $\Sigma_{\BCCSPD}$ are build according to the following grammar:
\[
t_{P} {\,\,::=\,\,} \nul \mid l.t_{P} ~~~ \forall_{l \in {L}}  \mid \chk(t_{D}, t_{P}) \mid \upd(t_{D}, t_{P}) \mid t_{P} + t_{P} .
\]
Intuitively, operation $\chk(t_{D}, t_{P})$ makes sure that, before executing an initial  action from $t_{P}$, the store has the value $t_{D}$, and $\upd(t_{D}, t_{P})$ changes the store value to $t_{D}$ after executing an initial action of process $t_{P}$. The prefix operation does not affect the store. We directly provide the curried set of rules defining the semantics of $\BCCSPDc$.
\[
\sosrule{}{l.x_{P} \xrightarrow{(x_{D},l,x_{D})} x_{P}}
\qquad
\sosrule{x_{P} \xrightarrow{(x_{D},l,x'_{D})} x'_{P}}{\chk(x_{D}, x_{P}) \xrightarrow{(x_{D},l,x'_{D})} x'_{P}}
\qquad
\sosrule{x_{P} \xrightarrow{(x_{D},l,x'_{D})} x'_{P}}{\upd(y_{D}, x_{P}) \xrightarrow{(x_{D},l,y_{D})} x'_{P}}
\]
\[
\sosrule{x_{P} \xrightarrow{(x_{D},l,x'_{D})} x'_{P}}{x_{P} + y_{P} \xrightarrow{(x_{D},l,x'_{D})} x'_{P}}
\qquad
\sosrule{y_{P} \xrightarrow{(x_{D},l,x'_{D})} y'_{P}}{x_{P} + y_{P} \xrightarrow{(x_{D},l,x'_{D})} y'_{P}}\,.
\]

Definition~\ref{def:strongbisim} can easily be adapted to the setting of SOS systems with data that are curried.

\begin{defn}
\label{def:bisimdata}
Consider a TSS ${\cal T} = (\Sigmap \cup \Sigmad, {L}, {\cal R})$, which means that ${\cal T}^{c} = (\Sigmap, {L^{c}}, {\cal R}^{c})$. A relation $R \subseteq T(\Sigmap) \times T(\Sigmap)$ is a \emph{strong bisimulation} if and only if it is symmetric and $\forall_{p, q} ~(p,q) \in R \Rightarrow \forall_{d, l, d',p'} ~ p \trans{(d, l, d')} p' \Rightarrow \exists_{ q'} ~ q \trans{(d, l, d')} q' \land (q, q') \in R $.
Two closed terms $p$ and $q$  are \emph{strongly bisimilar}, denoted by $p \bisimpar{{\cal T}^{c}} q$ if there exists a strong bisimulation relation $R$ such that $(p,q) \in R$.
\end{defn}

The axiomatization $E_{\BCCSPDc}$ of strong bisimilarity over $\BCCSPDc$, which is to be proven sound and ground-complete in the remainder of this section, is given below:
\[
\begin{array}{lcllr}
x_P + y_P &=&  y_P + x_P  && \mbox{(n-comm)}\\
x_P + (y_P + z_P)  &=&  (x_P + y_P) + z_P  && \mbox{(n-assoc)}\\
x_P + x_P   &=&  x_P && \mbox{(n-idem)}\\
x_P + \nul &=&  x_P  && \mbox{(n-zero)}\\
\chk(x_{D}, x_P + y_P) &=& \chk(x_{D}, x_P) + \chk(x_{D}, y_P) && \mbox{(nc)}\\
\upd(x_{D}, x_P + y_P) &=& \upd(x_{D}, x_P) + \upd(x_{D}, y_P) && \mbox{(nu)}\\
\chk(x_{D}, \upd (y_{D}, x_P) ) &=& \upd(y_{D}, \chk (x_{D}, x_P) ) && \mbox{(cu)} \\
\upd(x_{D},\upd(y_{D}, x_P) ) &=& \upd(x_{D}, x_P) && \mbox{(uu)}\\
\chk(d, \chk (d, x_P) ) &=& x_P \hspace{3ex} (\forall d \in \Sigma_{D}) && \mbox{(cc)} \\
\chk(d, \chk (d', x_P) ) &=& \nul \hspace{4.58ex} (\forall d,d' \in \Sigma_{D}, d \not= d') && \mbox{(cc')} \\
l.x_P &=& \sum_{d \in \Sigma_{D}} \upd (d, \chk (d, l.x_P) ) && \mbox{(lc)} 
\end{array}
\]

Recall that $\Sigma_{D}$ is a finite set of constants, and, therefore, the right hand side of axiom (lc) has a finite number of summands.

The following theorem is proved in the standard fashion.
\begin{thm}[Soundness]
For each two terms $s,t$ in $\mathbb{T}(\Sigma_{\BCCSPDc})$ it holds that if $E_{\BCCSPDc} \vdash s = t$ then $s \bisimpar{\BCCSPDc} t$.
\end{thm}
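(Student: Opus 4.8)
The plan is to prove soundness in the standard way: since $E_{\BCCSPDc} \vdash s = t$ means that $s = t$ lies in the smallest congruence generated by the axioms, and since $\bisimpar{\BCCSPDc}$ is already a congruence on $\mathbb{T}(\Sigma_{\BCCSPDc})$ — this is guaranteed because the curried rules for $\BCCSPDc$ are in (positive) GSOS format with labels whose data variables are disjoint from the process variables — it suffices to check that \emph{each individual axiom} is sound with respect to $\bisimpar{\BCCSPDc}$. That is, for each equation $u = v$ in $E_{\BCCSPDc}$ and each closed substitution, I must exhibit a strong bisimulation (in the sense of Definition~\ref{def:bisimdata}) relating the two closed instances. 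Concretely I would take, for each axiom, the symmetric closure of the identity relation augmented with the single pair of instantiated left/right-hand sides (and whatever finitely many derivative pairs are needed), and verify the transfer condition by inspecting the curried SOS rules.

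The verification splits into three groups. First, the four axioms (n-comm), (n-assoc), (n-idem), (n-zero) are exactly the $\BCCSP$ axioms $E_{\BCCSP}$; their soundness for strong bisimilarity with the enriched label set $\cl(L^c)$ is inherited verbatim from Theorem~\ref{thm:soundness_completeness} — the extra structure in the labels plays no role since the $+$ rules simply pass the label through unchanged. Second, the distributivity laws (nc), (nu) and the commutation/absorption laws (cu), (uu): for each of these I read off the outgoing transitions of both sides from the $\chk$, $\upd$ and $+$ rules. For instance, for (cu), $\chk(d,\upd(e,x_P))$ can move only if $\upd(e,x_P) \xrightarrow{(d,l,e)} x'_P$, i.e. if $x_P \xrightarrow{(d,l,d'')} x'_P$ for some $d''$, and then it emits $(d,l,e)$ reaching $x'_P$; while $\upd(e,\chk(d,x_P))$ can move if $\chk(d,x_P) \xrightarrow{(d,l,d'')} x'_P$, i.e. again $x_P \xrightarrow{(d,l,d'')} x'_P$, emitting $(d,l,e)$ to $x'_P$ — the transitions match exactly, so the identity-closure relation works. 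The (uu), (nc), (nu) cases are analogous bookkeeping. Third, the genuinely data-dependent axioms (cc), (cc$'$), (lc): here I use that $\Sigma_D$ is a finite set of constants. For (cc), $\chk(d,\chk(d,x_P)) \xrightarrow{(e,l,e')} x'_P$ requires $e = d$ (outer rule) and then $e = d$ again (inner rule) and $x_P \xrightarrow{(d,l,e')} x'_P$ — so it has exactly the transitions of $x_P$ whose source datum is $d$; but those are \emph{all} the transitions of $x_P$, because a closed label is $(d, l, e')$ for the \emph{same} $d$ on both "read" positions of the axiom, and $x_P$'s transition already carries source datum $d$ — wait, more precisely: in Definition~\ref{def:bisimdata} the bisimulation game fixes the source datum $d$, so one must check that for \emph{every} $d$, $\chk(d,\chk(d,x_P))$ and $x_P$ have the same $(d,\cdot,\cdot)$-transitions, which they do. For (cc$'$) with $d \neq d'$, $\chk(d,\chk(d',x_P))$ has no transitions at all (the outer rule forces source datum $d$, the inner forces $d'$, contradiction), matching $\nul$. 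For (lc), $\sum_{d\in\Sigma_D}\upd(d,\chk(d,l.x_P))$: the $d$-th summand fires only on a label whose source datum is $d$ and then, since $\chk(d,l.x_P)\xrightarrow{(d,l,d)}x_P$, the $\upd$ rewrites the target datum to $d$, giving transition $(d,l,d)\to x_P$; summing over all $d\in\Sigma_D$ (finite!) we recover exactly $\{(d,l,d)\to x_P : d\in\Sigma_D\}$, which is precisely the transition set of $l.x_P$.

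The main obstacle is not depth but care with the label-matching in Definition~\ref{def:bisimdata}: one must remember that the bisimulation game fixes the \emph{entire} source datum $d$ (not just process behaviour), so "having the same transitions" must be checked datum-by-datum, and the finiteness of $\Sigma_D$ is essential precisely in (lc) to ensure the right-hand sum is a well-formed term with finitely many summands whose transitions cover all of $\Sigma_D$. I would also note once, at the outset, that congruence of $\bisimpar{\BCCSPDc}$ reduces the whole problem to the per-axiom checks, so that closure under contexts and transitivity need not be re-proved. Each per-axiom bisimulation is finite and its transfer property is immediate from the shape of the three defining rules, so no induction on term structure or on proof trees is required beyond what Theorem~\ref{thm:soundness_completeness} already supplies for the $\BCCSP$ fragment.
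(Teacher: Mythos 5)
Your overall strategy --- reduce $E_{\BCCSPDc} \vdash s = t$ to per-axiom soundness via the fact that $\bisimpar{\BCCSPDc}$ is a congruence (guaranteed by the GSOS shape of the curried rules, with data variables in labels disjoint from process variables), and then verify each axiom by exhibiting a small bisimulation read off from the defining rules --- is exactly the ``standard fashion'' the paper appeals to without spelling it out, and your checks of (n-comm)--(n-zero), (nc), (nu), (cu), (uu), (cc$'$) and (lc) are correct.

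There is, however, a genuine error in your treatment of (cc). Your own computation shows that $\chk(d,\chk(d,x_P))$ has precisely those transitions of $x_P$ whose source datum is $d$; but Definition~\ref{def:bisimdata} quantifies over \emph{all} source data, so to have $\chk(d,\chk(d,x_P)) \bisimpar{\BCCSPDc} x_P$ the two sides must agree on $(e,\cdot,\cdot)$-transitions for \emph{every} $e\in\Sigma_D$, not only for $e=d$. They do not: with $\Sigma_D=\{d,d'\}$ and $d\neq d'$, the term $l.\nul$ has the transition $l.\nul \trans{(d',l,d')} \nul$, which $\chk(d,\chk(d,l.\nul))$ cannot match, since both occurrences of $\chk$ force the source datum to be $d$. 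So the instance of (cc) you are asked to validate is not sound as literally written, and your sentence ``which they do'' papers over exactly this failure: the bisimulation game fixes the source datum of the \emph{challenger's} transition, which need not be the $d$ occurring in the axiom. What your transition analysis actually establishes is $\chk(d,\chk(d,x_P)) \bisimpar{\BCCSPDc} \chk(d,x_P)$; note that this weaker equation is also the form in which (cc) is applied in the proof of Lemma~\ref{lem:tgc}, where the residual $\chk(d_i,\cdot)$ is retained in the head normal form. The right move is therefore to flag (cc) as presumably a typo for $\chk(d,\chk(d,x_P))=\chk(d,x_P)$ and prove soundness of that equation (which your computation already does); as it stands, your case analysis asserts a bisimulation that does not exist, and no correct proof of the theorem is possible with (cc) taken literally.
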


We now introduce the concept of terms in \emph{head normal form}, which is essential for proving the completeness of axiom systems.

\begin{defn}[Head Normal Form]
\label{def:hnf}
Let $\Sigmap$ be a signature such that $\Sigma_{\BCCSPDc} \subseteq \Sigmap$.
A term $t$ in ${\mathbb T}(\Sigmap)$ is in \emph{head normal form} (for short, h.n.f.) if
\[
t = {\sum_{i \in I} \upd(t'_{Di}, \chk(t_{Di},l_i.t_{Pi}))},\]
where, for every $i \in I$, $t_{Di}, t'_{Di} \in \mathbb{T}(\Sigmad)$, $t_{Pi} \in \mathbb{T}(\Sigmad)$, $l_{i} \in L$.
 The empty sum $(I = \emptyset)$ is denoted by the deadlock constant $\nul$.
\end{defn}

\begin{lem}[Head Normalization]
\label{lem:tgc}
For any term $p$ in ${T}(\Sigma_{\BCCSPDc})$, there exists $p'$ in ${T}(\Sigma_{\BCCSPDc})$ in h.n.f. such that $E_{\BCCSPDc} \vdash p = p'.$ 
\end{lem}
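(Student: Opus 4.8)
\noindent\emph{Proof plan.}\ I would argue by structural induction on the closed term $p\in T(\Sigma_{\BCCSPDc})$, following the grammar $t_P::=\nul\mid l.t_P\mid \chk(t_D,t_P)\mid \upd(t_D,t_P)\mid t_P+t_P$ (every occurring $t_D$ being a constant, as $\Sigmad$ consists of constants only). The induction hypothesis is that every proper subterm of $p$ is provably equal, via $E_{\BCCSPDc}$, to a term in h.n.f. Beforehand I would record the two identities $E_{\BCCSPDc}\vdash\upd(d,\nul)=\nul$ and $E_{\BCCSPDc}\vdash\chk(d,\nul)=\nul$ for each $d\in\Sigmad$: fixing a constant $d'\neq d$, axiom (cc') read from right to left gives $\nul=\chk(d',\chk(d,\nul))$, whence $\chk(d,\nul)=\chk(d,\chk(d',\chk(d,\nul)))=\nul$ by applying (cc') to the two outer checks, and $\upd(d,\nul)=\upd(d,\chk(d',\chk(d,\nul)))=\chk(d',\chk(d,\upd(d,\nul)))=\nul$ by pushing the update inward with (cu) twice and then applying (cc'). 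These let me delete $\nul$-summands with (n-zero).

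\noindent\emph{Base and easy inductive cases.}\ If $p=\nul$ it is already a h.n.f.\ (the empty sum). If $p=l.q$, axiom (lc) gives $l.q=\sum_{d\in\Sigmad}\upd(d,\chk(d,l.q))$, a finite sum (since $\Sigmad$ is finite) each summand of which already has the shape $\upd(t'_{Di},\chk(t_{Di},l_i.t_{Pi}))$ of Definition~\ref{def:hnf}; note $q$ need not be normalized. If $p=q+r$, the induction hypothesis provides h.n.f.\ terms $q'$ and $r'$ with $E_{\BCCSPDc}\vdash q=q'$ and $E_{\BCCSPDc}\vdash r=r'$; then $q'+r'$, after regrouping with (n-comm), (n-assoc) and dropping any $\nul$-summand with (n-zero), is again a sum of summands of the required shape.

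\noindent\emph{The auxiliary-operator cases.}\ These carry the real work. For $p=\upd(d,q)$: by the induction hypothesis $E_{\BCCSPDc}\vdash q=\sum_{i\in I}\upd(t'_{Di},\chk(t_{Di},l_i.t_{Pi}))$; distributing $\upd(d,\cdot)$ over the sum with (nu) and collapsing each $\upd(d,\upd(t'_{Di},\cdot))$ to $\upd(d,\cdot)$ with (uu) yields $\sum_{i\in I}\upd(d,\chk(t_{Di},l_i.t_{Pi}))$, a h.n.f.\ (using $\upd(d,\nul)=\nul$ when $I=\emptyset$). For $p=\chk(d,q)$: put $q$ in h.n.f.\ as above, distribute $\chk(d,\cdot)$ with (nc), and move it past each outer $\upd(t'_{Di},\cdot)$ with (cu), reaching $\sum_{i\in I}\upd(t'_{Di},\chk(d,\chk(t_{Di},l_i.t_{Pi})))$. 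Now the hypothesis that $\Sigmad$ is a finite set of constants is used decisively: for each $i$ either $d=t_{Di}$, when (cc) rewrites $\chk(d,\chk(d,l_i.t_{Pi}))$ to $l_i.t_{Pi}$, which I re-expand with (lc) and then push $\upd(t'_{Di},\cdot)$ inward with (nu) and (uu), recovering finitely many h.n.f.\ summands; or $d\neq t_{Di}$, when (cc') rewrites the inner term to $\nul$, so the summand becomes $\upd(t'_{Di},\nul)=\nul$ and is deleted by (n-zero). The surviving finite sum is a h.n.f., which closes the induction.

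\noindent\emph{Main obstacle.}\ The delicate case is $p=\chk(d,q)$: one must verify that the interplay of (nc), (cu), the constant case-split via (cc)/(cc') and the re-expansion via (lc) terminates and produces a term of exactly the shape demanded by Definition~\ref{def:hnf}. The two facts that make this go through are that no recursion into the tails $t_{Pi}$ is ever triggered (h.n.f.\ constrains only the outermost operator layer) and that the spurious $\nul$-summands produced along the way really can be eliminated --- precisely what the two preliminary $\nul$-identities, together with the finiteness of $\Sigmad$, provide. Checking that each resulting summand literally matches $\upd(t'_{Di},\chk(t_{Di},l_i.t_{Pi}))$ and keeping track of the index sets is then routine.
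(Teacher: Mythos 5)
Your proof is correct and follows the same overall route as the paper's: induction on the structure of $p$ with a case split on the head symbol, using (lc) for prefixes, (nu)/(uu) for $\upd$, (nc)/(cu)/(cc)/(cc') for $\chk$, and regrouping for $+$. Two genuine divergences are worth recording. First, in the $\chk(d,q)$ case the paper applies (cc) as if it read $\chk(d,\chk(d,x_P))=\chk(d,x_P)$, i.e.\ it collapses the double check to a single one and keeps the summand $\upd(d'_i,\chk(d_i,l_i.p'_i))$ exactly when $d_i=d$; you instead take (cc) as literally printed, obtain $\upd(d'_i,l_i.p'_i)$, and must re-expand via (lc), (nu), (uu), ending with $\sum_{e\in\Sigma_{D}}\upd(d'_i,\chk(e,l_i.p'_i))$. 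Both are legitimate equational derivations from the printed axioms and both land in h.n.f., so the lemma as stated is established either way; but the two resulting normal forms are not strongly bisimilar (yours admits transitions from every initial datum $e$, the paper's only from $d_i$), which shows that (cc) as printed cannot be sound and is evidently intended as the idempotence law that the paper's own proof actually uses. Your reading is the more faithful one and costs you the extra re-expansion step; had you compared the outcome with the Soundness theorem you could have flagged the axiom itself. Second, you explicitly derive $\chk(d,\nul)=\nul$ and $\upd(d,\nul)=\nul$ in order to discard the dead summands produced by (cc'), a step the paper silently elides when it writes the restricted sum over $\{i\in I \mid d_i=d''\}$; note only that your derivation of these identities requires some $d'\neq d$ and hence $|\Sigma_{D}|\ge 2$. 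Everything else --- the base case, the prefix case, the observation that the tails $t_{Pi}$ need not be recursively normalized, and the treatment of $+$ --- matches the paper's argument.
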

\begin{proof}
By induction on the number of symbols appearing in $p$.
\remarkDG{Should we better call this ``By structural induction on $p$.''?}
We proceed with a case distinction on the head symbol of  $p$.

\newcommand{\eqdef}[1]{\hspace{1ex}{\stackrel{#1}{=}}\hspace{1ex}}

\noindent
\emph{Base case}
\begin{itemize}
\item $p = \nul$; this case is vacuous, because $p$ is already in h.n.f.
\end{itemize}

\noindent
\emph{Inductive step cases}
\begin{itemize}
\item
$p$ is of the shape $l.p'$; then

\hspace{-1.5ex}
$
\begin{array}{l}
 p \eqdef{\textnormal{def. } p} l.p' \eqdef{\textnormal{(lc)}} 
 \sum_{d \in T(\Sigma_{D})} \upd (d, \chk (d, l.p') ) \textnormal{, which is in h.n.f.}
\end{array}
$

\item $p$ is of the shape $\chk(d'', p'')$; then

\hspace{-1.5ex}
$
\begin{array}{l}
p \eqdef{\textnormal{def. } p}
\chk(d'', p'')  \eqdef{\textnormal{ind. hyp. }} \\
\chk(d'', \sum_{i \in I} \upd (d'_{i}, \chk (d_{i}, l_{i}.p'_{i}))) \eqdef{\textnormal{(nc)}} 
\sum_{i \in I} \chk(d'',\upd(d'_{i}, \chk(d_{i}, l_{i}.p'_{i}))) \eqdef{\textnormal{(cu)}} \\
\sum_{i \in I} \upd(d'_{i}, \chk(d'', \chk(d_{i}, l_{i}.p'_{i}))) \eqdef{\textnormal{(cc,cc')}}
\sum_{i \in I, d_{i} = d''} \upd(d'_{i}, \chk(d_{i}, l_{i}.p'_{i})), \\\textnormal{which is in h.n.f.}
\end{array}
$

\item $p$ is of the form $\upd(d'', p'')$; then

\hspace{-1.5ex}
$
\begin{array}{l}
p \eqdef{\textnormal{def. } p}
\upd(d'', p'')  \eqdef{\textnormal{ind. hyp. }}
\upd(d'', \sum_{i \in I} \upd (d'_{i}, \chk (d_{i}, l_{i}.p'_{i}))) \eqdef{\textnormal{(nu)}} \\
\sum_{i \in I}\upd(d'', \upd (d'_{i}, \chk (d_{i}, l_{i}.p'_{i})))  \eqdef{\textnormal{(uu)}} 
\sum_{i \in I}\upd(d'', \chk (d_{i}, l_{i}.p'_{i})), \\
\textnormal{which is in h.n.f.}
\end{array}
$

\item $p$ is of the form $p_0 + p_1$; then

$
\begin{array}{l}
p \eqdef{\textnormal{def. } p}
p_0 + p_1 \eqdef{\textnormal{ind. hyp.}}
\sum_{i \in I} \chk(d'',\upd(d'_{i}, \chk(d_{i}, l_{i}.p'_{i}))) ~+ \\
\sum_{j \in J} \chk(d'',\upd(d'_{j}, \chk(d_{j}, l_{j}.p'_{j}))) = \\
\sum_{k \in I \cup J} \chk(d'',\upd(d'_{k}, \chk(d_{k}, l_{k}.p'_{k}))) \textnormal{, which is in h.n.f.}
\end{array}
$

\end{itemize}
\end{proof}

\begin{thm}[Ground-completeness]
For each two closed terms $p,q \in {T}(\Sigma_{\BCCSPDc})$, it holds that if $p \bisimpar{\BCCSPDc} q$, then $E_{\BCCSPDc} \vdash p = q$.
\end{thm}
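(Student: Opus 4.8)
The proof follows the classical pattern for ground-completeness modulo bisimilarity (cf.\ Theorem~\ref{thm:soundness_completeness} and \cite{Aceto94}): reduce both terms to head normal form, then play transitions against axiomatic manipulations of summands. I would argue by strong induction on $\max(\mathsf{depth}(p), \mathsf{depth}(q))$, where $\mathsf{depth}(p)$ denotes the length of a longest transition sequence out of $p$ --- a finite number, since $\BCCSPDc$-terms are finite and the rules are in GSOS format. Deliberately this is a \emph{semantic} measure and not syntactic size, because the axioms, (lc) in particular, may enlarge terms. In the inductive step, given closed $p, q$ with $p \bisimpar{\BCCSPDc} q$, use the Head Normalization Lemma (Lemma~\ref{lem:tgc}) to pick $p', q' \in T(\Sigma_{\BCCSPDc})$ in head normal form with $E_{\BCCSPDc} \vdash p = p'$ and $E_{\BCCSPDc} \vdash q = q'$; by the soundness theorem $p' \bisimpar{\BCCSPDc} p \bisimpar{\BCCSPDc} q \bisimpar{\BCCSPDc} q'$, so it suffices to show $E_{\BCCSPDc} \vdash p' = q'$.

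The structural fact that makes both the measure and the matching work is that each summand $\upd(t'_{Di}, \chk(t_{Di}, l_i.t_{Pi}))$ of a head normal form has precisely one outgoing transition, namely $\upd(t'_{Di}, \chk(t_{Di}, l_i.t_{Pi})) \trans{(t_{Di}, l_i, t'_{Di})} t_{Pi}$. Hence the transitions of $p'$ are exactly $\{\, p' \trans{(d_i, l_i, d'_i)} p_i \mid i \in I \,\}$ (with $p' = \nul$ when $I = \emptyset$), and $\mathsf{depth}(p') = 1 + \max_i \mathsf{depth}(p_i)$; since $\mathsf{depth}$ is invariant under bisimilarity, $\mathsf{depth}(p_i) < \mathsf{depth}(p') = \mathsf{depth}(p)$ for every $i$, and symmetrically for $q'$.

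Now for each summand $\upd(d'_i, \chk(d_i, l_i.p_i))$ of $p'$ we have $p' \trans{(d_i, l_i, d'_i)} p_i$, so $p' \bisimpar{\BCCSPDc} q'$ provides a matching $q' \trans{(d_i, l_i, d'_i)} q_j$ with $p_i \bisimpar{\BCCSPDc} q_j$; as the label triple pins down the check constant, the action and the update constant, this transition must originate from a summand $\upd(d'_i, \chk(d_i, l_i.q_j))$ of $q'$. Since $\mathsf{depth}(p_i) < \mathsf{depth}(p)$ and $\mathsf{depth}(q_j) < \mathsf{depth}(q)$, the induction hypothesis gives $E_{\BCCSPDc} \vdash p_i = q_j$, hence $E_{\BCCSPDc} \vdash \upd(d'_i, \chk(d_i, l_i.p_i)) = \upd(d'_i, \chk(d_i, l_i.q_j))$, a summand of $q'$. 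So every summand of $p'$ is provably equal to a summand of $q'$; symmetrically for $q'$. The standard absorption argument with (n-comm), (n-assoc) and (n-idem) --- with (n-zero) handling the case $p' = \nul$ or $q' = \nul$ --- then yields $E_{\BCCSPDc} \vdash p' = p' + q' = q' + p' = q'$, and therefore $E_{\BCCSPDc} \vdash p = p' = q' = q$.

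The main obstacle is precisely this choice of well-founded measure and checking that it decreases: one must induct on transition depth rather than term structure, and verify --- from the rigid shape of head normal forms --- that the residuals reached after a single transition are strictly smaller in that measure. The rest is the familiar summand-matching bookkeeping, carried out here over data-annotated labels.
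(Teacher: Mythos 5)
Your proof is correct and follows essentially the same route as the paper: head-normalize via Lemma~\ref{lem:tgc}, induct on a well-founded measure, and match each summand of one head normal form against a summand of the other via the transfer condition of bisimilarity, closing with the standard absorption argument over $(\mbox{n-comm})$, $(\mbox{n-assoc})$, $(\mbox{n-idem})$ and $(\mbox{n-zero})$. The only difference is the measure --- you induct on semantic transition depth and re-normalize at each step, whereas the paper inducts on a syntactic \emph{height} function defined directly on head normal forms; both decrease on the residuals, and the core summand-matching argument is identical.
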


\begin{proof}
We assume, by Lemma~\ref{lem:tgc} that $p,q$ are in h.n.f., define the function $\textit{height}$ as follows:

\[\textit{height}(p)=
\left\{
\begin{array}{ll}
0 & \text{ if } p = \nul  \\
1 + \max(\textit{height}(p_1), \textit{height}(p_2)) & \text{ if } p = p_1 + p_2 \\
1 + \textit{height}(p') & \text{ if } p = \upd(d', \chk(d, l.p')),
\end{array}
\right.
\]
and prove the property by induction on $M = \max(\textit{height}(p),\textit{height}(q))$.

\medskip
\noindent
\emph{Base case} ($M = 0$) This case is vacuous, because  $p = q = \nul$, so $E_{\BCCSPDc} \vdash p = q$.

\medskip
\noindent
\emph{Inductive step case} ($M > 0$) We prove $E_{\BCCSPDc} \vdash p = q + p$ by arguing that every summand of $q$ is provably equal to a summand of $p$. Let $\upd(d', \chk(d, l.q'))$ be a summand of $q$. By applying the rules defining {\BCCSPDc}, we derive $q \trans{(d,l,d')} q'$. As $q \bisimpar{\BCCSPDc} p$ holds, it has to be the case that $p \trans{(d,l,d')} p'$ and $q' \bisimpar{\BCCSPDc} p'$ hold. As $\max(\textit{height}(q'),\textit{height}(p')) < M$, from the inductive hypothesis it results that $E_{\BCCSPDc} \vdash q' = p'$, hence $\upd(d', \chk(d, l.q'))$ is provably equal to $\upd(d', \chk(d, l.p'))$, which is a summand of $p$.

It follows, by symmetry, that $E_{\BCCSPDc} \vdash q = p + q$ holds, which ultimately leads to the fact that $E_{\BCCSPDc} \vdash p = q$ holds.
\end{proof}

{
Consider a TSS with data ${\cal T} = (\Sigma_{P} \cup \Sigma_{D}, L, \mathcal{R})$. For an operation $f \in \Sigma_{P}$, we denote by $\mathcal{R}_{f}$ the set of all rules defining $f$. All the rules in $\mathcal{R}_{f}$ are in the GSOS format extended with the data component.
For the simplicity of presenting the axiomatization schema, we assume that $f$ only has process terms as arguments, baring in mind that adding data terms is trivial.
}

When given a signature $\Sigma_{P}$ that includes $\Sigma_{\BCCSPDc}$, the purpose of an axiomatization for a term $p \in T(\Sigma_{P})$ is to derive another term $p'$ such that $p \bisimpar{{\cal T}^{c}} \,p'$ and $p' \in T(\Sigma_{\BCCSPDc})$.

\begin{defn}[Axiomatization schema]\label{def:axiomatization} Consider a TSS ${\cal T}^{c} = (\Sigma_{P}, L^{c}, {\cal R}^{c})$ such that $\textnormal{\BCCSPDc} \sqsubseteq {\cal T}^{c}$. By $E_{{\cal T}^{c}}$ we denote the axiom system that extends $E_{\BCCSPDc}$ with the following axiom schema for every operation $f$ in $\cal T$, parameterized over the vector of closed process terms $\vec{p}$ in h.n.f.:

\begin{center}
\noindent
$f(\vec{p}) = \sum \left\{\upd(d', \chk(d, l.C[\vec{p},\vec{y_{P}}])) \,\,\Bigg|\,\, \rho = \sosrule{H}{f(\vec{p}) \xrightarrow{(d, l, d')} C[\vec{p},\vec{q}]} \in cl({\cal R}^{c}_{f}) \textnormal{ ~and }
 ~\checkmark(\vec{p}, \rho)
 \right\}$,
\end{center}

\noindent
\hspace{5ex} where $\checkmark$ is defined as
~~\(
\checkmark(\vec{p}, \rho) = \bigwedge_{{p_{k}} \in \vec{p}} ~\checkmark'(p_{k}, k, \rho),\)

\noindent
\hspace{5ex}
\(
\begin{array}{@{}l@{}l}
\text{and }~~
\checkmark' &\left( {p_{k}}, k, \sosrule{\{x_{P_i} \trans{({d_{i}}, l_{ij}, {d'_{ij}})} y_{P_{ij}}  \mid  i \in I, j \in J_{i}\}}{f(\vec{p}) \xrightarrow{(d, l, d')} C[\vec{p},\vec{q}]} \right) =\\[3ex]
&~\textnormal{if } k \in I \textnormal{ then } \forall_{j \in J_{k}}~~ \exists_{p', p''}~~ {E_{\BCCSPDc} \vdash {p_{k}} = \upd({d'_{kj}}, \chk({d_{k}}, l_{kj}.{p'})) + {p''}}.
\end{array}
\)

\end{defn}

Intuitively, the axiom transforms $f(\vec{p})$ into a sum of closed terms covering all its execution possibilities. We iterate, in order to obtain them, through the set of $f$-defining rules and check if $\vec{p}$ satisfies their hypotheses by using the meta-operation $\checkmark$. $\checkmark$ makes sure that, for a given rule, every component of $\vec{p}$ is a term with enough action prefixed summands satisfying the hypotheses associated to that component. Note that the axiomatization is built in such a way that it always derives terms in head normal form. Also note that the sum on the right hand side is finite because of our initial assumption that the signature for data is a finite set of constants.

The reason why we conceived the axiomatization in this manner is of practical nature.
Our past experience shows that this type of schemas may bring terms to their normal form faster than finite axiomatizations.
Aside this, we do not need to transform the initial system, as presented in \cite{Aceto94}.

\begin{thm}
Consider a TSS ${\cal T}^{c} = (\Sigma_{P}, L^{c}, {\cal R}^{c})$ such that ${\BCCSPDc} \sqsubseteq {\cal T}^{c}$. $E_{{\cal T}^{c}}$ is sound and ground-complete for strong bisimilarity on $T(\Sigma_{P})$.
\end{thm}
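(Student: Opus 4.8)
The plan is to prove soundness and ground-completeness separately, in each case reducing the statement for $E_{{\cal T}^{c}}$ to the already established properties of $E_{\BCCSPDc}$ (the Soundness and Ground-completeness theorems stated above) together with the Head Normalization Lemma~\ref{lem:tgc}. For soundness, since $E_{\BCCSPDc}\subseteq E_{{\cal T}^{c}}$ is sound and $\bisimpar{{\cal T}^{c}}$ is a congruence over $\Sigma_{P}$ (the rules of ${\cal T}^{c}$ are in the data-extended positive GSOS format, and the only variables occurring in the label patterns are data variables, hence disjoint from the process variables), it suffices to verify that every instance of the schema of Definition~\ref{def:axiomatization} is sound. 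So I would fix an operation $f$ and a vector $\vec{p}$ of closed terms in head normal form, let $t$ be the corresponding right-hand side, and show that $\{(f(\vec{p}),t),(t,f(\vec{p}))\}\cup\mathrm{Id}$ is a strong bisimulation in the sense of Definition~\ref{def:bisimdata}. The crux is that, because each component $p_{k}$ of $\vec{p}$ is in head normal form and $E_{\BCCSPDc}$ is sound and ground-complete, for a closed-label instance $\rho$ of an $f$-defining rule the predicate $\checkmark(\vec{p},\rho)$ holds precisely when the premises of $\rho$ can be fulfilled from $\vec{p}$, and in that case the witnessing derivatives may be taken to be the actual syntactic continuations of the relevant summands of the $p_{k}$; hence $f(\vec{p})\trans{(d,l,d')}r$ holds in ${\cal T}^{c}$ exactly when $r$ is a continuation $C[\vec{p},\vec{q}]$ contributed by some such $\rho$. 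On the other hand, by the operational rules for $\upd$, $\chk$ and prefixing, the matching summand $\upd(d',\chk(d,l.C[\vec{p},\vec{q}]))$ of $t$ affords exactly the one transition $\trans{(d,l,d')}C[\vec{p},\vec{q}]$. Thus $f(\vec{p})$ and $t$ have the same transitions with identical targets, and the displayed relation is a bisimulation.

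For ground-completeness the key preliminary is an \emph{extended head normalization} claim: every closed term $p\in T(\Sigma_{P})$ satisfies $E_{{\cal T}^{c}}\vdash p=\bar p$ for some closed $\bar p\in T(\Sigma_{\BCCSPDc})$ in head normal form. I would prove this by a well-founded induction on $p$. If the head symbol of $p$ already belongs to $\Sigma_{\BCCSPDc}$, normalize the arguments by the induction hypothesis (so that $p$ becomes a $\Sigma_{\BCCSPDc}$-term) and invoke Lemma~\ref{lem:tgc}. If the head symbol is a new operation $f$, normalize the arguments to head-normal $\Sigma_{\BCCSPDc}$-terms $\vec{\bar p}$, apply the schema axiom for $f$ at $\vec{\bar p}$ to rewrite $p$ into $\sum\upd(d',\chk(d,l.C[\vec{\bar p},\vec{q}]))$, recursively normalize each continuation $C[\vec{\bar p},\vec{q}]$, and substitute the results back in using congruence; the outcome is again in head normal form over $\Sigma_{\BCCSPDc}$.

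Granting the extended normalization claim, the theorem follows quickly. Given closed $p\bisimpar{{\cal T}^{c}}q$, normalize both sides to $\bar p,\bar q\in T(\Sigma_{\BCCSPDc})$; by the soundness part, $\bar p\bisimpar{{\cal T}^{c}}\bar q$. Since ${\cal T}^{c}$ is a disjoint extension of $\BCCSPDc$ (Definition~\ref{def:disjExt}), a routine argument shows that a $\Sigma_{\BCCSPDc}$-term can only fire $\BCCSPDc$-rules and that its derivatives are again $\Sigma_{\BCCSPDc}$-terms, so $\bar p$ and $\bar q$ exhibit the same reachable behaviour in ${\cal T}^{c}$ as in $\BCCSPDc$; hence $\bar p\bisimpar{\BCCSPDc}\bar q$. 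The Ground-completeness theorem for $E_{\BCCSPDc}$ then gives $E_{\BCCSPDc}\vdash\bar p=\bar q$, so $E_{{\cal T}^{c}}\vdash\bar p=\bar q$, and by transitivity $E_{{\cal T}^{c}}\vdash p=\bar p=\bar q=q$.

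The main obstacle is establishing termination of the extended head normalization: nothing purely syntactic prevents a continuation $C[\vec{\bar p},\vec{q}]$ produced while eliminating $f$ from containing $f$ again, so the naive recursion need not terminate, and turning the induction into a genuinely well-founded one is the real work. One would have to exploit the shape of the positive GSOS rules --- for instance by stratifying the operations so that the continuations generated for an operation $f$ are built only from operations strictly lower in the stratification (the role played by the smooth-and-distinctive preprocessing of \cite{Aceto94}), or by restricting attention to systems in which every closed term reaches only finitely many states up to bisimilarity. The remaining ingredients --- the bisimulation check for each schema instance, the behavioural transparency of disjoint extension, and the final chain of equalities --- are routine.
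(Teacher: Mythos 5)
Your proposal follows essentially the same route as the paper: soundness is established by showing that $f(\vec{p})$ and the right-hand side of the schema afford exactly the same transitions to the same targets (the paper does this by a direct two-way matching argument on the closed-label rules in $cl(\mathcal{R}^{c}_{f})$, using the fact that $\checkmark(\vec{p},\rho)$ holds precisely when the premises of $\rho$ are fulfilled by the summands of the $p_k$), and completeness is obtained by reducing to the ground-completeness theorem for $E_{\BCCSPDc}$ via the observation that every axiom instance has a head-normal-form right-hand side. Your write-up is considerably more explicit than the paper's, which disposes of completeness in a single sentence (``it is easy to see that \ldots the completeness of the axiom schema reduces to the completeness proof for bisimilarity on $T(\Sigma_{\BCCSPDc})$'') and of soundness in a short transition-matching sketch.

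The one substantive point of divergence is the termination issue you flag for the extended head normalization: since a continuation $C[\vec{p},\vec{q}]$ may again contain $f$ or other non-$\BCCSPDc$ operations, the recursive elimination of new operators need not terminate, and some well-founded measure (or a stratification in the style of the smooth-and-distinctive preprocessing of \cite{Aceto94}, or a finite-reachability assumption) is needed to make the induction go through. This is a genuine gap, but it is equally present --- and unacknowledged --- in the paper's own proof, which simply asserts the reduction. So your argument matches the paper's where the paper actually argues, and is more honest about the step that neither of you fully discharges; closing it would require exactly the kind of additional hypothesis or preprocessing you describe.
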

\begin{proof}
It is easy to see that, because of the head normal form of the right hand side of every axiom, the completeness of the axiom schema reduces to the completeness proof for bisimilarity on $T(\Sigma_{\BCCSPDc})$.

In order to prove the soundness, we denote, for brevity, the right hand side of the schema in Definition~\ref{def:axiomatization} by \textit{RHS}. 

Let us first prove that if $f(\vec{p})$ performs a transition then it can be matched by \textit{RHS}. Consider a rule $\rho \in cl({\cal R}^{c}_{f})$ that can be applied for $f(\vec{p})$: $\rho = \sosrule{\{x_i \trans{({d_{i}}, l_{ij}, {d_{ij}})} y_{ij}  \mid  i \in I, j \in J_{i}\}}{f(\vec{x}) \trans{(d, l, d')} C[\vec{x}, \vec{y}]}$.
Then $f(\vec{p}) \trans{(d, l_{ij}, d')} C[\vec{p}, \vec{q}]$ holds and, at the same time, all of the rule's premises are met.
This means that $p_{i}$ is of the form $\sum_{j \in J_{i}} \upd({d_{ij}}, \chk({d_{i}}, l_{ij}.p_{ij})) + p'$ for some $p'$ and $p_{ij}$'s. It is easy to see that all the conditions for $\checkmark$ are met, so $(d, l, d').C[\vec{p},\vec{q}]$ is a summand of \textit{RHS}, and therefore it holds that $\textit{RHS} \trans{({d_{i}}, l_{ij}, {d_{ij}})} C[\vec{p},\vec{q}]$, witch matches the transition from $f(\vec{p})$.

The proof for the fact that $f(\vec{p})$ can match any of the transitions of \textit{RHS} is similar.
\end{proof}

We end this section with the remark that the problem of extending the axiomatization schema to the setting with arbitrary data terms is still open.
The most promising solution we have thought of involves using the infinite alternative quantification operation from \cite{Reniers02}.
This operation would us to define and express head normal forms as (potentially) infinite sums, parameterized over data variables.

\section{Case Study: The Coordination Language Linda \label{sec:cases}}

In what follows we present the semantics and properties of a core prototypical language.

The provided specification defines a structural operational semantics for  the coordination language Linda;
the specification is taken from \cite{Mousavi05-IC} and is a slight adaptation of the original semantics presented in \cite{Brogi98} (by removing structural congruences and introducing a terminating process $\epsilon$).
Process constants (atomic process terms) in this language are $\epsilon$ (for terminating process), $\ask(u)$ and $\nask(u)$ (for checking existence and absence of tuple $u$ in the shared data space, respectively), $\tell(u)$ (for adding tuple $u$ to the space) and $\get(u)$ (for taking tuple $u$ from the space). Process composition operators in this language include nondeterministic choice ($+$), sequential composition ($\Seq$) and parallel composition ($ \parallel $). The data signature of this language consists of a constant $\{\}$ for the empty multiset and a class of unary function symbols $\cup \{u\}$, for all tuples $u$, denoting the union of a multiset with a singleton multiset containing tuple $u$. The operational state of a Linda program is denoted by $\State{p}{\sdspace}$ where $p$ is a process term in the above syntax and $\sdspace$ is a multiset modeling the shared data space.

The transition system specification defines one relation $\trans{}$ and one predicate $\downarrow$. Note that $\trans{}$ is unlabeled, unlike the other relations considered so far.
Without making it explicit, we tacitly consider the termination predicate $\downarrow$ as a binary transition relation $\trans{\downarrow}$ with the pair $(x_{P}, x_{D})$, where $x_{P}$ and $x_{D}$ are fresh yet arbitrary process and data variables, respectively.

Below we provide a table consisting of both the original and the curried and closed label  versions of the semantics of Linda on the left and, respectively, on the right.

\begin{longtable}{c@{\qquad}c}
$\nr{1} \sosrule{}{\State{\epsilon}{\sdspace}\downarrow}$ &
$\nr{1_c} \sosrule{}{\epsilon\downarrow}$
\\[4ex]
$\nr{2} \sosrule{}{(\ask(u),{\sdspace \cup \{u\}})\trans{} \State{\epsilon}{\sdspace \cup \{u\}}}$ &
$\nr{2_c} \sosrule{}{\ask(u) \trans{(d\cup \{u\}, -, d\cup \{u\})} \epsilon}$
\\[4ex]
$\nr{3} \sosrule{}{\State{\tell(u)}{\sdspace} \gtrans{}{} \State{\epsilon}{\sdspace \cup \{u\}}}$ &
$\nr{3_c} \sosrule{}{\tell(u) \trans{(d, -, d \cup \{u\})} \epsilon}$
\\[4ex]
$\nr{4} \sosrule{}{\State{\get(u)}{\sdspace \cup \{u\}} \gtrans{}{} \State{\epsilon}{\sdspace}}$ &
$\nr{4_c} \sosrule{}{\get(u) \trans{(d \cup \{u\}, -, d)} \epsilon}$
\\[4ex]
$\nr{5} \sosrule{}{\State{\nask(u)}{\sdspace} \gtrans{}{} \State{\epsilon}{\sdspace}}[u \notin \sdspace]$ &
$\nr{5_c} \sosrule{}{\nask(u) \trans{(d, -, d)} \epsilon}[u \notin d]$
\\[3ex]
$\nr{6} \sosrule{\State{x_{P}}{\sdspace}\downarrow}{\State{x_{P}+y_{P}}{\sdspace}\downarrow}$ ~~ $\nr{7} \sosrule{\State{y_{P}}{\sdspace}\downarrow}{\State{x_{P}+y_{P}}{\sdspace}\downarrow}$ &
$\nr{6_c} \sosrule{x_{P} \downarrow}{x_{P}+y_{P}\downarrow}$ ~~ $\nr{7_c} \sosrule{y\downarrow}{x_{P}+y_{P}\downarrow}$
\\[3ex]
$\nr{8} \sosrule{\State{x_{P}}{\sdspace} \gtrans{}{} \State{x'_{P}}{\sdspace'}}{\State{x_{P}+y_{P}}{\sdspace} \gtrans{}{} \State{x'_{P}}{\sdspace'}}$ &
$\nr{8_c} \sosrule{x_{P} \trans{(d,-,d')} x'_{P}}{x_{P}+y_{P} \trans{(d,-,d')} x'_{P}}$
\\[3ex]
$\nr{9} \sosrule{\State{y_{P}}{\sdspace} \gtrans{}{} \State{y'_{P}}{\sdspace'}}{\State{x_{P}+y_{P}}{\sdspace} \gtrans{}{} \State{y'_{P}}{\sdspace'}}$ &
$\nr{9_c} \sosrule{y_{P} \trans{(d,-,d')} y'_{P}}{x_{P} + y_{P} \trans{(d,-,d')} y'_{P}}$
\\[3ex]
$\nr{10}\sosrule{\State{x_{P}}{\sdspace} \gtrans{}{} \State{x'_{P}}{\sdspace'}}{\State{x_{P} \Seq y_{P}}{\sdspace} \gtrans{}{} \State{x'_{P} \Seq y_{P}}{\sdspace'}}$ &
$\nr{10_c} \sosrule{x_{P} \trans{(d,-,d')} x'_{P}}{x_{P} \Seq y_{P} \trans{(d,-,d')} x'_{P} \Seq y_{P}}$
\\[3ex]
$\nr{11} \sosrule{\State{x_{P}}{\sdspace}\downarrow ~ \State{y_{P}}{\sdspace} \gtrans{}{} \State{y'_{P}}{\sdspace'}}{\State{x_{P} \Seq y_{P}}{\sdspace} \gtrans{}{} \State{y'_{P}}{\sdspace'}}$ &
$\nr{11_c} \sosrule{x_{P} \downarrow \quad y_{P} \trans{(d,-,d')} y'_{P}}{x_{P} \Seq y_{P} \trans{(d,-,d')} y'_{P}}$
\\[3ex]
$\nr{12} \sosrule{\State{x_{P}}{\sdspace}\downarrow ~ \State{y_{P}}{\sdspace}\downarrow}{\State{x_{P} \Seq y_{P}}{\sdspace} \downarrow}$ &
$\nr{12_c} \sosrule{x_{P}\downarrow ~ y_{P}\downarrow}{x_{P} \Seq y_{P} \downarrow}$
\\[3ex]
$\nr{13} \sosrule{\State{x_{P}}{\sdspace} \gtrans{}{} \State{x'_{P}}{\sdspace'}}{\State{x_{P} \parallel y_{P}}{\sdspace} \gtrans{}{} \State{x' \parallel y}{\sdspace'}}$ &
$\nr{13_c} \sosrule{x_{P} \trans{(d,-,d')} x'_{P}}{x_{P} \Par y_{P} \trans{(d,-,d')} x'_{P} \Par y_{P}}$
\\[3ex]
$\nr{14} \sosrule{\State{y_{P}}{\sdspace} \gtrans{}{} \State{y'_{P}}{\sdspace'}}{\State{x_{P} \parallel y_{P}}{\sdspace} \gtrans{}{} \State{x_{P} \parallel y'_{P}}{\sdspace'}}$ &
$\nr{14_c} \sosrule{y_{P} \trans{(d,-,d')} y'_{P}}{x_{P} \Par y_{P} \trans{(d,-,d')} x_{P} \Par y'_{P}}$
\\[3ex]
$\nr{15} \sosrule{\State{x_{P}}{\sdspace}\downarrow ~ \State{y_{P}}{\sdspace}\downarrow}{\State{x_{P}  \parallel y_{P}}{\sdspace}\downarrow}$ &
$\nr{15_c} \sosrule{x_{P} \downarrow ~ y_{P}\downarrow}{x_{P} \parallel y_{P}\downarrow}$
\end{longtable}

In the curried SOS rules, $d$ and $d'$ are arbitrary closed data terms, i.e., each transition rule given in the curried semantics represents a (possibly infinite) number of rules for each and every particular $d,d' \in T(\Sigmad)$. It is worth noting that by using the I-MSOS framework~\cite{DBLP:journals/entcs/MossesN09} we can present the curried system without explicit labels at all as they are propagated implicitly between the premises and conclusion.

Consider transition rules $({6_c})$, $({7_c})$, $({8_c})$, and $({9_c})$; they are the only $+$\,-\,defining rules and
they fit in the commutativity format of Definition~\ref{def::commform}.
It follows from Theorem~\ref{thm:commform} that the equation $x + y = y + x$ is sound with respect to strong bisimilarity in the curried semantics.
Subsequently, following Theorem \ref{thm:sliffsb}, we have that the previously given equation is sound with respect to stateless bisimilarity in the original semantics. (Moreover, we have that $(x_0 + x_1, d) = (x_1 + x_2, d)$ is sound with respect to statebased bisimilarity for all $d \in T(\Sigmad)$.)

Following a similar line of reasoning, we get that $x \Par y = y \Par x$ is sound with respect to stateless bisimilarity in the original semantics.

In addition, we derived the following axioms for the semantics of Linda, using the meta-theorems stated in the third column of the table.
The semantics of sequential composition in Linda is identical to the sequential composition (without data) studied in Example~9 of \cite{Mousavi08-CONCUR}; there, it is shown that this semantics conforms to the \textsc{Assoc-De Simone} format introduced in \cite{Mousavi08-CONCUR} and hence, associativity of sequential composition follows immediately.
Also semantics of nondeterministic choice falls within the scope of the \textsc{Assoc-De Simone} format (with the proposed coding of predicates), and hence, associativity of  nondeterministic choice follows (note that in \cite{Mousavi08-CONCUR} nondeterministic choice without termination rules is treated in Example~1; moreover, termination rules in the semantics of parallel composition are discussed in Section 4.3 and shown to be safe for associativity).
Following a similar line of reasoning associativity of parallel composition follows from the conformance of its rules to the
\textsc{Assoc-De Simone} format of \cite{Mousavi08-CONCUR}.
Idempotence for $+$ can be obtained, because rules $({6_c})$, $({7_c})$ and $({8_c})$, $({9_c})$ are choice rules \cite[Definition 40]{Mousavi12-SCICO} and the family of rules  $({6_c})$ to $({9_c})$ for all data terms $d$ and $d'$ ensure that the curried specification is in idempotence format with respect to the binary operator $+$. The fact that  $\epsilon$ is unit element for $;$ is proved similarly as in \cite{Mousavi11-TCS}, Example~10.

\[
\begin{array}{l | c | l}
\hline
\mbox{Property} & \mbox{Axiom} & \mbox{Meta-Theorem} \\
\hline
\mbox{Associativity for } \Seq & x  \Seq (y \Seq z) = (x \Seq  y) \Seq z & \text{ Theorem 1 of } \cite{Mousavi08-CONCUR}\\  
\mbox{Associativity for } +    & x  + (y + z) = (x + y) + z & \text{ Theorem 1 of } \cite{Mousavi08-CONCUR}\\ 
\mbox{Associativity for } \Par & x  \Par (y \Par z) = (x \Par  y) \Par z & \text{ Theorem 1 of } \cite{Mousavi08-CONCUR}\\  
\mbox{Idempotence for } + & x  + x = x & \text{ Theorem 42 of } \cite{Mousavi12-SCICO} \\  
\mbox{Unit element for } \Seq & \epsilon \Seq x = x & \text{ Theorem 3 of } \cite{Mousavi11-TCS} \\  
\mbox{Distributivity of } + \mbox{ over } \Seq & (x + y) \Seq z = (x \Seq y) + (x \Seq z) & \text{ Theorem 3 of } \cite{Mousavi11-LATA} \\
\hline
\end{array}
\]

We currently cannot derive an axiomatization for Linda because its semantics involves arbitrary data terms, as opposed to a finite number of constants.

\section{Conclusions\label{sec:conc}}

In this paper, we have proposed a generic technique for extending the meta-theory of algebraic properties to SOS with data, memory or store.
In a nutshell, the presented technique allows for focusing on the structure of the process (program) part in SOS rules and ignoring the data terms in order to obtain algebraic properties, as well as, a sound and ground complete set of equations
\remarkDG{Any reason that we do not mention the completeness property?}
w.r.t.\ stateless bisimilarity.
We have demonstrated the applicability of our method by means of the well known coordination language Linda.


It is also worth noting that one can check whether a system is in the process-tyft format presented in \cite{DBLP:conf/lics/MousaviRG04} in order to infer that stateless bisimilarity is a congruence, and if this is the case, then strong bisimilarity over the curried system is also a congruence. Our results are applicable to a large body of existing operators in the literature and make it possible to dispense with several lengthy and laborious soundness proofs in the future.


Our approach can be used to derive algebraic properties that are sound with respect to weaker notions of bisimilarity with data, such as initially stateless and statebased bisimilarity \cite{Mousavi05-IC}. We do expect to obtain stronger results, e.g., for zero element with respect to statebased bisimilarities, by scrutinizing data dependencies particular to these weaker notions.
We would like to study coalgebraic definitions of the notions of bisimilarity with data (following the approach of \cite{Turi97}) and
develop a framework for SOS with data using the  bialgebraic approach.
{Furthermore, it is of interest to check how our technique can be applied to quantitative systems where non-functional aspects like probabilistic choice or stochastic timing is encapsulated as data.}
We also plan to investigate the possibility of automatically deriving axiom schemas for systems whose data component is given as arbitrary terms, instead of just constants.

\paragraph{Acknowledgements.} We thank Luca Aceto, Peter Mosses, and Michel Reniers for their valuable comments on earlier versions of the paper. Eugen-Ioan Goriac is funded by
 the project `Extending and Axiomatizing Structural Operational
 Semantics: Theory and Tools' (nr.~1102940061) of the Icelandic
 Research Fund.

\end{document}